\numberwithin{equation}{section}
\theoremstyle{plain}
\newcommand{\h}{\hspace*{.24in}}
\newtheorem{Assumption}{Assumption}[section]
\newtheorem{Remark}{Remark}[section]
\newtheorem{Definition}{Definition}[section]
\newtheorem{Lemma}{Lemma}[section]
\newtheorem{Theorem}{Theorem}[section]
\newtheorem{Corollary}{Corollary}[section]
\title{The shape of the one-dimensional phylogenetic likelihood function}
\date{}
\author{
Vu Dinh and Frederick A.~Matsen IV \\
Program in Computational Biology\\
Fred Hutchinson Cancer Research Center
}
\begin{document}
\maketitle

\begin{abstract}
By fixing all parameters in  a phylogenetic likelihood model except for one branch length, one obtains a one\hyp dimensional likelihood function.
In this work, we introduce a mathematical framework to characterize the shapes of such one\hyp dimensional phylogenetic likelihood functions.
This framework is based on analyses of algebraic structures on the space of all frequency patterns with respect to a polynomial representation of the likelihood functions.
Using this framework, we provide conditions under which the one\hyp dimensional phylogenetic likelihood functions are guaranteed to have at most one stationary point, and this point is the maximum likelihood branch length.
These conditions are satisfied by common simple models including all binary models, the Jukes-Cantor model and the Felsenstein 1981 model.

We then prove that for the simplest model that does not satisfy our conditions, namely, the Kimura 2-parameter model, the one\hyp dimensional likelihood functions may have multiple stationary points.
As a proof of concept, we construct a non-degenerate example in which the phylogenetic likelihood function has two local maxima and a local minimum.
To construct such examples, we derive a general method of constructing a tree and sequence data with a specified frequency pattern at the root.
We then extend the result to prove that the space of all rescaled and translated one\hyp dimensional phylogenetic likelihood functions under the Kimura 2-parameter model is dense in the space of all non-negative continuous functions on $[0, \infty)$ with finite limits.
These results indicate that one\hyp dimensional likelihood functions under advanced evolutionary models can be more complex than it is typically assumed by phylogenetic inference algorithms; however, these complexities can be effectively captured by the Kimura 2-parameter model.

\paragraph{Keywords} evolutionary model, molecular evolution, phylogenetics, likelihood model, characteristic polynomial, algebraic representation, multimodality, universal model.

\end{abstract}

\section{Introduction}

The likelihood of a phylogenetic model is a function of the parameters of continuous time Markov chains (CTMCs) used to model sequence evolution along each branch.
It is common to assume a single rate matrix and stationary frequency for the CTMCs but allow the branch lengths to vary, representing a single evolutionary process but differing amounts of evolution along each branch.
Commonly used maximum-likelihood phylogeny programs improve likelihood by modifying branch lengths iteratively and one at a time \cite{bryant2005likelihood}.
The general approach for numerical maximization of the one\hyp dimensional likelihood function given by fixing every parameter except for one branch length is to iteratively sample the function at a number of points, use surrogate functions to fit simple curves to those points, and use those fits as approximations to locate the maximum branch length.
For example, programs often employ Newton's method, in which the intuitive idea is to use first and second derivatives to approximate the likelihood function (varying along that branch) by a surrogate quadratic function.
Since evaluations of the likelihoods (and their derivatives) are computationally expensive, many approaches have been tried to improve the efficiency of this optimization procedure \cite{bryant2005likelihood}.

Such approaches, however, rely on the assumptions that one\hyp dimensional phylogenetic likelihood functions belong to some class of simple functions, and that the surrogate model can, at least, capture the shape of the functions.
While there has been a considerable amount of work on finding multiple maxima of the multi-dimensional likelihood surfaces parameterized by all branch lengths for a tree \cite{steel1994maximum,chor2000multiple, rogers1999multiple}, little has been done about the shapes of one\hyp dimensional phylogenetic likelihood functions.
The only attempt to investigate the shape of the one\hyp dimensional phylogenetic likelihood functions has been \cite{fukami1989maximum}, which provided a proof of uniqueness of the stationary points for one\hyp dimensional phylogenetic likelihood functions in the case of the one parameter model of nucleotide substitution.
Based on this proof, the authors of \cite{fukami1989maximum} asserted that there is at most one stationary point of the full likelihood surface.
This claim was later disproved by \cite{steel1994maximum}, although the proof for the one\hyp dimensional case still holds.
However, the result has not been examined for the more complex models used in practice.

In this work, we introduce a mathematical framework to characterize the shapes of such one\hyp dimensional phylogenetic likelihood functions.
This framework is based on analyses of algebraic structures on the space of all frequency patterns with respect to a polynomial representation of the likelihood functions.
Specifically, we introduce the new concept of \emph{logarithmic relative frequency patterns} and analyze algebraic structures on the space of such patterns.
These structures, along with the \emph{characteristic polynomial representations} of one\hyp dimensional phylogenetic likelihood functions, open a new way to explore the space of all possible likelihood functions.
Moreover, by composing these structures, we are able to tackle the inverse problem of constructing a phylogenetic tree that has a given frequency pattern at the root.
This enables us to construct phylogenetic trees that approximate any given likelihood function with arbitrary precision.

Using this framework, we provide conditions under which the one\hyp dimensional phylogenetic likelihood functions are guaranteed to have at most one stationary point, and this point is the maximum of the one\hyp dimensional function.
These conditions are satisfied by common simple models including all binary models, the Jukes-Cantor model \cite{jukes1969evolution} and the Felsenstein 1981 model \cite{felsenstein1981evolutionary}.
We then prove that for the simplest model that does not satisfy our conditions, namely, the Kimura 2-parameter model \cite{kimura1980simple}, the one\hyp dimensional likelihood functions may have multiple stationary points.
As a proof of concept, we construct a non-degenerate example in which the phylogenetic likelihood function has two local maxima and a local minimum.

We then extend the result to prove that the space of all rescaled and translated one\hyp dimensional phylogenetic likelihood functions under the Kimura 2-parameter model is dense in the set of all non-negative continuous functions on $[0, \infty)$ with finite limits.
These results indicates that one\hyp dimensional likelihood functions under advanced evolutionary models can be more complex than it is typically assumed by phylogenetic inference algorithms; however, these complexities can be effectively captured by the Kimura 2-parameter model.

\section{Background and Definitions}
\label{background}
\subsection{Markov models of sequence evolution}
Our setting is the standard IID setting for likelihood-based phylogenetics with a finite number of sites; we review the basics here but refer the reader to \cite{felsenstein2004inferring} for more details.
Let $\Omega$ denote the set of states and let $r= |\Omega|$.
For convenience, we assume that the states have indices $1$ to $r$.

For an unrooted tree $T$ with $N$ taxa, we use $E(T)$ and $V(T)$ to denote the set of edges and vertices of $T$, respectively.
On each edge $e \in E(T)$, we assume that the mutation events occur according to a continuous time Markov chain on states $\Omega$ with instantaneous rate matrix $Q_e$.
This rate matrix $Q_e$ and the branch length $t_e$ on the edge $e$ define the transition matrix $P^e=e^{Q_et_e}$ on edge $e$, where $P^e_{ij}(t_e)$ denotes the probability of mutating from state $i$ to state $j$ across the edge $e$ (with length $t_e$).

\begin{figure}
\centering
\begin{subfigure}{.5\textwidth}
  \centering
  \includegraphics[width=0.6\linewidth]{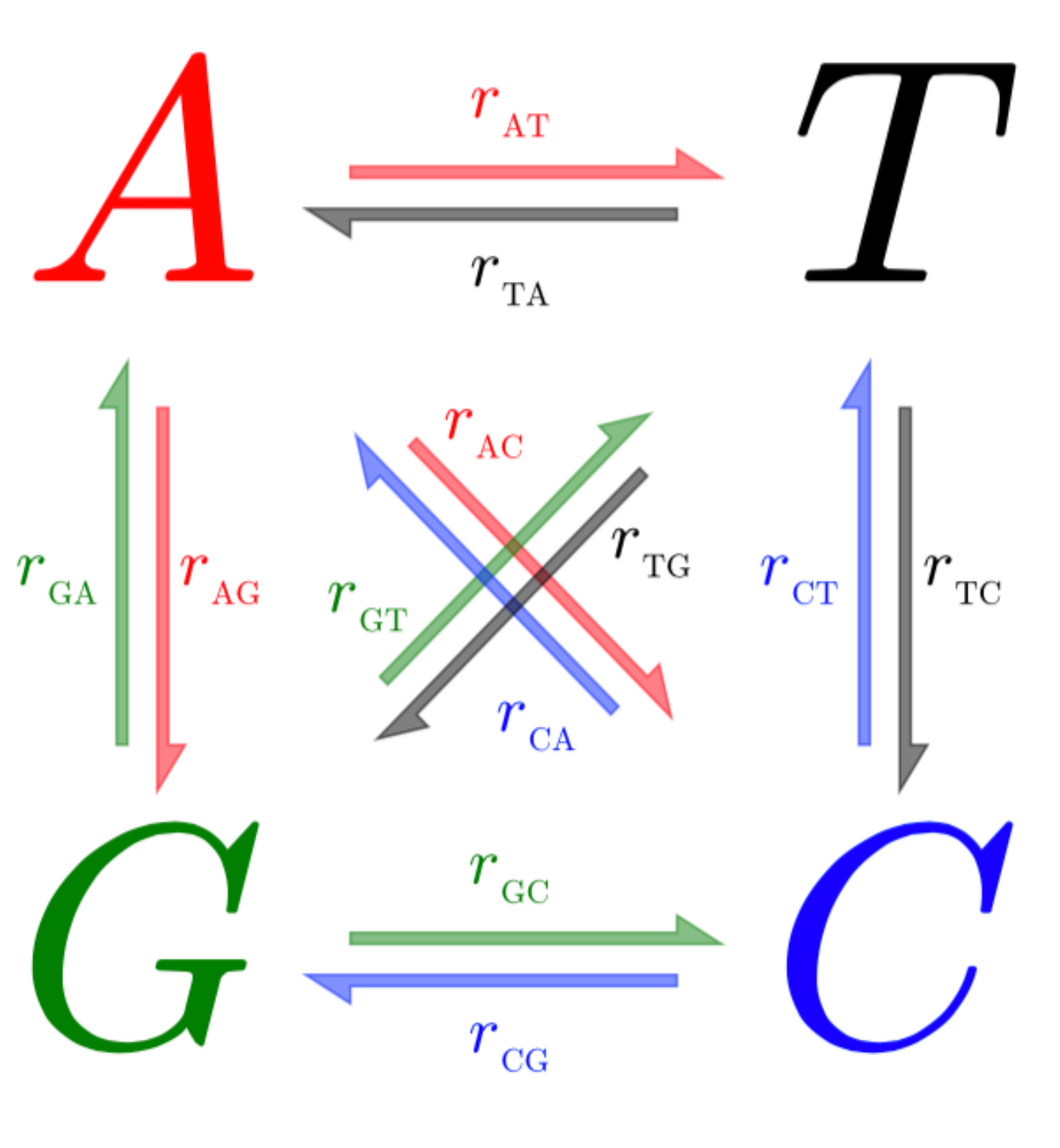}
\end{subfigure}%
\begin{subfigure}{.6\textwidth}
  \centering
  \includegraphics[width=0.8\linewidth]{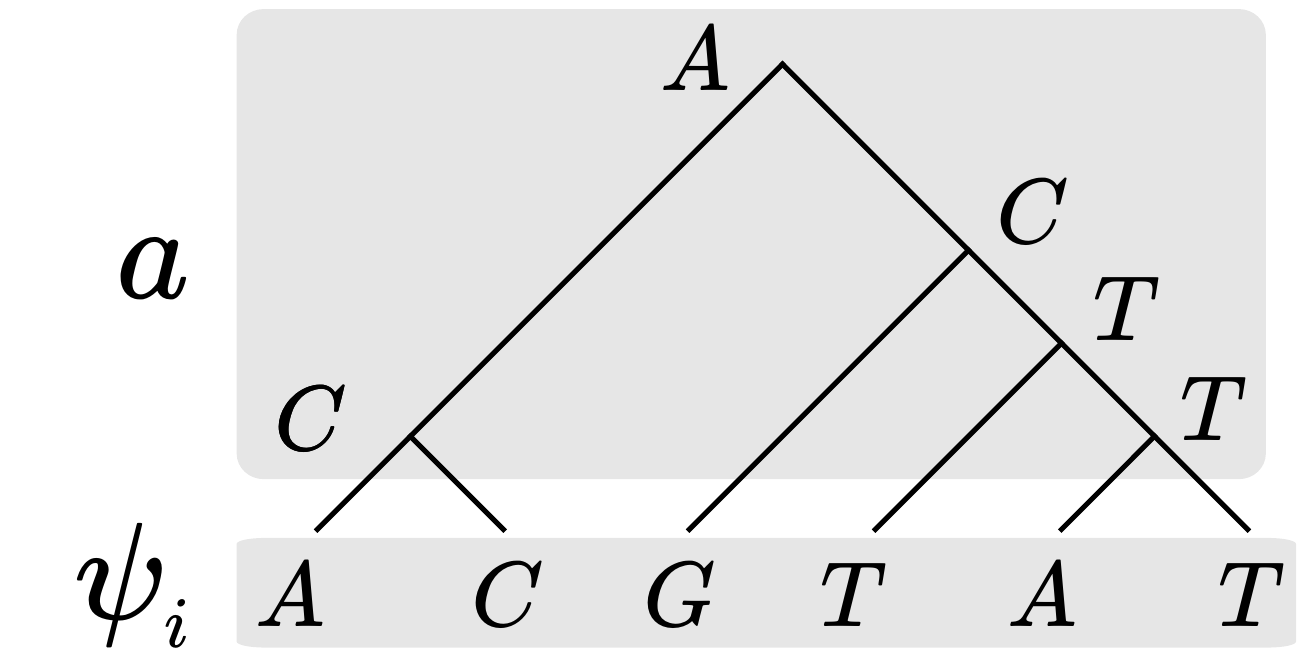}
\end{subfigure}
\caption{[Left:] A general Markov model of DNA evolution along a tree edge. [Right:] An extension $a$ of the labelling $\psi_i$ (corresponds to site $i$ in the sequences) of the leaves of a simple tree $\tau$ to its inner nodes. }
\label{Fig1}
\end{figure}

We further assume that for all edges $e \in E(T)$, the Markov chains that describe the mutation events are ergodic and time-reversible with respect to a fixed stationary distribution $\pi$, that is
\[
\lim_{t \to \infty} {P^e_{ij}(t)} = \pi_j,
\]
and
\[
\pi_i P^e_{ij} (t) = \pi_j P^e_{ji} (t) \h \forall t,
\]
for all $i, j \in \Omega$ and $e \in E(T)$.

The phylogenetic likelihood is computed as follows given a set of (aligned) observed sequences $\psi = (\psi_1, \psi_2,...,\psi_S) \in \Omega^{N \times S}$ of length $S$ over $N$ taxa of a tree $\tau$.
First orient the edges of $\tau$ away from an arbitrarily chosen root $\rho$ of the tree.
(We can choose the root arbitrarily since each $P_e$ is reversible with respect to $\pi$.)
Each site $i$ in the sequences determines a labeling $\psi_i$ of each leaf by a state in $\Omega$.
An extension $a$ of a labeling $\psi_i$ is an assignment of states to all of the nodes in the tree that agrees with $\psi$ on the leaves.

The probability of an extension $a$ given the vector of branch lengths $\bold{t} =(t_e)_{e \in E(T)}$ is defined to be the probability of the state at the root (given by the stationary distribution) multiplied by the probabilities of all the state transitions (including self-transitions) across each branch in the tree
\[
P(a | \mathbf{t}) =  \pi(a_{\rho}) \prod_{(u,v)\in E(T)}{P^{uv}_{a_ua_v}( t_{uv})},
\]
where $a_u$ denotes the assigned state of node $u$ by $a$.

The likelihood of the data at site $i$ is then the marginal probability over all the extensions
\[
P(\psi_i | \mathbf{t}) =  \sum_{a ~ \text{extends} ~\psi}{P(a | \mathbf{t})}.
\]
We further assume, as is standard, that evolution is independent between sites.
This implies that the likelihood of a set of sequences evolving is just the product of the probabilities for the individual sites
\[
L(\psi|\mathbf{t}) =  \prod_{s=1}^S{P(\psi_i | \mathbf{t})}.
\]
In summary, the likelihood of observing $\psi$ given the tree topology $\tau$ and the vector of branch lengths $\bold{t} =(t_e)_{e \in E(T)}$ has the form
\[
L(\psi|\mathbf{t}) = \prod_{s=1}^S{\sum_{a }{~ \pi(a_{\rho}) \prod_{(u,v)\in E(T)}{P^{uv}_{a_ua_v}( t_{uv})}}}
\]
where $a$ ranges over all extensions of $\psi$ to the internal nodes of $T$ and $a_u$ denotes the assigned state of node $u$ by $a$.

For readers familiar with the theory of probabilistic inference on graphical models, the likelihood functions studied in this paper can be alternatively described as follows.
Consider a tree $T$ and let $\{X_v: v \in V(T)\}$ be a collection of random variables indexed by the nodes of the tree.
For each edge $(u,v) \in E(T)$, we define the nonnegative potential function
\[
k_{(u,v)}(i, j, t) := P^{uv}_{ij}(t).
\]
We assume that the joint probability distribution $p(x_{V(T)})$ factorizes over the tree edges:
\[
p\left(x_{V(T)}\right) \sim \prod_{(u,v) \in E(T)}{k_{(u,v)}(x_u, x_v, t_{uv})}.
\]
The likelihood functions of interest may then be represented as the marginal probability of the observation $\psi$ on the leaves of the tree $T$.
This formulation allows us to study the phylogenetic likelihood functions beyond the reversible Markov framework.
We will investigate partial extensions to this more general case in Section $\ref{graphical}$, but for the next several sections we will focus on the standard phylogenetic setting (in which we can prove the strongest results).

\subsection{One-dimensional phylogenetic likelihood functions}

To investigate the one\hyp dimensional likelihood function on one branch $e_0$, we fix all other branches, partition the set of all extensions of $\psi$ according to their labels at the end points of $e_0$, and split $E(T)$ into two sets of edges $E_{\operatorname{left}}$ and $E_{\operatorname{right}}$ corresponding to the location of the edges with respect to $e_0$.
The likelihood function can be rewritten as a univariate function of $t$, the branch length of $e_0$:
\begin{eqnarray*}
L(\psi|t) &=& \prod_{s=1}^S{\sum_{ij}{\sum_{a \in \mathcal{A}_{ij}}{ ~ \pi(a_{\rho}) \left( \prod_{e \in E_{\operatorname{left}} }{P^e_{a_ua_v}( t_{uv})}\right) }}}\\
&& \h \h   \times ~ P^{e_0}_{ij}(t) \times \left( \prod_{e \in E_{\operatorname{right}} }{P^e_{a_ua_v}( t_{uv})}\right)
\end{eqnarray*}
where $\mathcal{A}_{ij}$ denotes the set of all extensions of $\psi$ for which the labels at the left end point and the right end point of $e_0$ are $i$ and $j$, respectively.
We note that some $\mathcal{A}_{ij}$ may be empty if $e_0$ is a pendant edge and the observed value on the corresponding leaf is not $i$.

By grouping the products over $E_{\operatorname{left}}$ and $E_{\operatorname{right}}$ as well as the sum over $a$ in a single term $b_{ij}^s$, we can define the one\hyp dimensional log-likelihood function as
\[
\ell_{e_0}(t)=\log{L(\psi|t)}=\sum_{s=1}^S{\log{\left(\sum_{ij}{b^s_{ij}P^{e_0}_{ij}(t)}\right)}}.
\]
Such $\ell_{e_0}(t)$ are the object of study of this paper.

For convenience, we will assume that $e_0$ has been chosen and will drop the index $e_0$ hereafter.

\subsection{Evolutionary models}

Throughout the paper, we use the term \emph{evolutionary model} on state set $\Omega$ to refer to a collection $\mathcal{H}$ of $(Q, \pi)$ pairs, where $\pi$ is a vector of stationary frequencies  and $Q$ is a rate matrix on $\Omega$ that is reversible with respect to $\pi$.
If at every edge of the tree $\tau$, the matrix-frequency pair $(Q_e, \pi)$ belongs to $\mathcal{H}$, we say that $\tau$ is a tree under evolutionary model $\mathcal{H}$.

We will consider a number of different evolutionary models of DNA sequences.
These DNA substitution models differ in terms of the parameters used to describe the rates at which one state replaces another during evolution and the stationary frequencies:
\begin{itemize}
\item Jukes-Cantor model \cite{jukes1969evolution}: this model assumes equal stationary frequencies ($\pi_A = \pi_G = \pi_T=\pi_C = 1/4$) and equal mutation rates.
\item Felsenstein 1981 model \cite{felsenstein1981evolutionary}: this is an extension of the Jukes-Cantor model in which  stationary frequencies are allowed to vary.
\item Kimura 2-parameter model \cite{kimura1980simple}: this model assumes equal stationary frequencies, but  distinguishes between the rates of transitions ($A \leftrightarrow G$, i.e. from purine to purine, or $C \leftrightarrow T$, i.e. from pyrimidine to pyrimidine) and transversions (from purine to pyrimidine or vice versa).

Following common usage,  we use $\kappa$ to denote the transition/transversion rate ratio and write the rate matrix for this model as
\begin{equation*}
Q_{\kappa}=\frac{1}{2(\kappa +1)}\left( \begin{array}{cccc}
-(\kappa+2) & \kappa & 1 & 1 \\
\kappa & -(\kappa+2) & 1 & 1 \\ \label{q0}
1 & 1 & -(\kappa+2) & \kappa \\
1 & 1 & \kappa & -(\kappa+2) \end{array} \right).
\end{equation*}
The special case $\kappa=3$ will play a central role in the analysis of this paper.
Note that the single $\kappa$ parameter in the Kimura 2-parameter model determines a rate matrix that is shared across the tree, while this paper primarily concerns the effect of changing a single branch length parameter.
\end{itemize}

While the focus here is on DNA models, we emphasize that our theoretical framework is capable of analyzing any time-reversible evolutionary model on any state space.
In fact, we do not assume a uniform molecular clock, or even a single evolutionary model along the edges of the tree.

\subsection{Characteristic polynomials of one\hyp dimensional phylogenetic likelihood functions}

We will frequently use the following assumption:
\begin{Assumption}
The eigenvalues of the rate matrix $Q$ are equal to
\begin{equation*}
0=d_0 \gamma \ge -d_1\gamma \ge -d_2\gamma \ge \ldots \ge -d_{r-1}\gamma
\end{equation*}
for some positive number $\gamma$ and non-negative integers $d_1, \ldots, d_{r-1}$.
\label{as1}
\end{Assumption}

The following remark, whose proof is provided in the Appendix, guarantees that Assumption~$\ref{as1}$ does not affect the generality of our analyses up to an arbitrarily small approximation error:

\begin{Remark}
The set of rate matrices $Q$ for a given evolutionary model that satisfy Assumption~$\ref{as1}$ is dense in the set of rate matrices under the same evolutionary model.
\label{rem1}
\end{Remark}

Under Assumption~$\ref{as1}$, if we denote the entries of the diagonalizing matrix $M$ and $N$ of $Q$ by $m_{ij}$ and $n_{ij}$, respectively, then the transition probabilities can computed as
\[
P_{ij}(t) = \sum_{k}{m_{ik} e^{-d_k \gamma t} n_{kj}}.
\]
By reparametrizing with $x:=e^{-\gamma t}$, we can represent these transition probabilities as polynomial functions
\[
P_{ij}(x) = \sum_{k}{m_{ik} x^{d_k} n_{kj}}.
\]
Similarly, the log-likelihood function can be rewritten as
\[
\ell(x)= \sum_{s=1}^S{\log{\left( \lambda_s(x)
\right)}} \h \text{where} \h
\lambda_s(x)= \sum_{ij}{b_{ij}^s P_{ij}(x)} .
\]
Hereafter, we will refer to $P_{ij}(x)$ and $\lambda_s(x)$ as the \textit{transition polynomials} of the evolutionary model and the \textit{characteristic polynomials} of the one\hyp dimensional phylogenetic likelihood function, respectively.

As we will see in later sections, this polynomial representation will enable us to exploit many algebraic and analytic properties of the likelihood functions.
The most noticeable feature is that one can use the Fundamental Theorem of Algebra to factorize $\lambda_s(x)$ as products of linear and quadratic polynomials.
As a result, the log-likelihood function can be written in the form
\begin{eqnarray*}
\ell(x) &=&  \sum_{s=1}^S \sum_{i=1}^{i_{s,1}}{\log(\alpha_{s,i} + \beta_{s,i} x)} \\
&& \h + \h  \sum_{s=1}^S \sum_{i=1}^{i_{s,2}}{\log(\mu_{s,i} + \nu_{s,i} x+ \omega_{s,i}x^2)}
\end{eqnarray*}
where $\mu_{s,i},  \nu_{s,i}, \omega_{s,i}$ are the (real) coefficients of the quadratic polynomials in the decomposition of $\lambda_s$, while $\alpha_{s,i}, \beta_{s,i}$ are coefficients of the linear terms in the decomposition.

This enables us to decompose a complicated evolutionary model into smaller modules, each of which can be approximated either by a ``linear'' model (like the binary symmetric model) or by a ``quadratic'' model (like the Kimura 2-parameter model).
In Section 3, we use this formulation to prove that if the phylogenetic log-likelihood function is essentially linear (that is, there are no quadratic terms in the expression), its shape resembles those generated by binary models, with a unique stationary point that is also the maximum point.
In Section 5, we illustrate that this property does not hold for quadratic models by constructing a counter-example with the Kimura 2-parameter model.
Finally, in Section 6, we use this formulation once again to prove that  the space of all rescaled and translated one\hyp dimensional phylogenetic likelihood functions under the Kimura 2-parameter model is dense in the space of all continuous functions on $[0, \infty)$ with finite limits.

\section{Uniqueness of the stationary point}
\label{sec:unique}

In this section, we discuss a condition under which the uniqueness of the stationary branch length is guaranteed.

The analyses in this section stem from two observations:
\begin{itemize}
\item[1. ] If for every site index$~s$, the characteristic polynomial $\lambda_s$ has no non-real root, then the likelihood function can be decomposed into smaller modules, each of which resemble a binary model.
\item[2. ] The likelihood functions of binary models and summations of such models are \emph{incave}.
\end{itemize}

\begin{Definition}[Hanson \cite{hanson1981sufficiency}]
A vector-valued function $f$ is said to be incave in $\mathbb{R}^n$ if there exists a vector-valued function $\phi(t,u)$ such that
\[
f(t)-f(u) \le \phi(t,u) \cdot \nabla f(u), \h \forall t, u \in \mathbb{R}^n
\]
where $\nabla f$ denotes the gradient of $f$.
\end{Definition}

Incave functions were introduced in the optimization literature as a generalization of concave functions\cite{hanson1981sufficiency}.
It can be proven that a function is incave if and only if every stationary point is a global maximum \cite{ben1986invexity}.
We are interested in the case of functions of a single real argument, for which the following result also holds:
\begin{Lemma}
If $f$ is  a real-valued incave function with a finite number of stationary points, then $f$ has at most one stationary point. Moreover, if such a point exists, it is also a global maximum.
\label{finite}
\end{Lemma}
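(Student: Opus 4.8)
The plan is to lean entirely on the characterization recalled just above the statement: since $f$ is incave, every stationary point of $f$ is automatically a global maximum \cite{ben1986invexity}. This observation disposes of the second assertion immediately---if a stationary point $u^*$ exists at all, then $u^*$ is a global maximum of $f$. It therefore remains only to show that an incave $f$ with finitely many stationary points cannot have two distinct ones, and the finiteness hypothesis will be invoked exactly once, to rule out a degenerate flat configuration.

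For the uniqueness claim I would argue by contradiction. Suppose $f$ possesses two distinct stationary points $a < b$. Because both are global maxima, they share the common value $M := f(a) = f(b) = \max f$, and $f(x) \le M$ for every $x$. Note that $f$ is differentiable (the definition of incave refers to $\nabla f$), hence continuous, so $f$ attains its minimum over the compact interval $[a,b]$ at some point $c_0 \in [a,b]$. I would then split into two cases according to the behavior of $f$ on $[a,b]$.

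If $f$ is identically equal to $M$ on $[a,b]$, then $f'(x) = 0$ for every $x \in (a,b)$, producing infinitely many stationary points and contradicting the finiteness hypothesis. Otherwise there is a point $c \in (a,b)$ with $f(c) < M$, so the minimum value $f(c_0)$ is strictly below $M$; in particular $c_0$ cannot be either endpoint (where $f$ equals $M$), whence $c_0$ lies in the open interval $(a,b)$. An interior minimum of a differentiable function is a stationary point, so $f'(c_0) = 0$, yet $f(c_0) < M$ means $c_0$ is not a global maximum---contradicting the characterization that every stationary point of an incave function is a global maximum. As both cases are impossible, $f$ has at most one stationary point.

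The argument is elementary once the incave characterization is in hand; the only delicate point is the flat case, which is precisely where the finiteness assumption is indispensable (without it, a function constant on a maximizing plateau would be incave yet have infinitely many stationary points, all of them global maxima). The main conceptual step is recognizing that two distinct global maxima of a continuous differentiable function must bracket either a plateau or a strictly interior minimum, and that in the latter situation the interior minimum furnishes a stationary point that is not a global maximum, which incavity forbids.
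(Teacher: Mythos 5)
Your proof is correct and follows essentially the same strategy as the paper's: both invoke the characterization that every stationary point of an incave function is a global maximum, and then derive a contradiction from the existence of two stationary points by elementary one-variable calculus. The only (immaterial) difference is the mechanism of the contradiction --- the paper uses finiteness to pick two \emph{consecutive} stationary points and applies Rolle's theorem to produce a forbidden stationary point between them, whereas you use finiteness only to exclude a flat plateau and otherwise exhibit an interior minimum of $f$ on $[a,b]$, a stationary point that is not a global maximum; both arguments are sound.
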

\begin{proof} Denote $A=\{t \in [0, \infty): f'(t)=0\}$ and assume that $A$ has more than one element.
Since $A$ is finite, we can choose two elements $t_1$ and $t_2$ in $A$ such that the interval $(t_1,t_2) \subset \mathbb{R} - A$.
Since $f$ is incave, every stationary point of $f$ is a global maximum.
We deduce that $t_1$ and $t_2$ are both global maxima of $f$ and $f(t_1)=f(t_2)$.
Using the mean value theorem, there exists $t \in (t_1,t_2)$ such that $f'(t)=0$. This is a contradiction.
\end{proof}

This enables us to prove the following theorem.

\begin{Theorem}
If for every site index $s$, the polynomial $\lambda_s$ has only real roots, then $\ell$ has at most one stationary point. Moreover, if such a point exists, it is also a global maximum.
\label{Theo1}
\end{Theorem}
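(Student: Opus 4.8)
The plan is to show that, under the hypothesis, $\ell$ is genuinely concave in the reparametrized variable $x = e^{-\gamma t}$, and then transfer this property back to the branch-length variable $t$ so that Lemma~\ref{finite} applies. First I would invoke the assumption that each $\lambda_s$ has only real roots to factor, via the Fundamental Theorem of Algebra, $\lambda_s(x) = c_s \prod_i (x - r_{s,i})$ with every $r_{s,i} \in \mathbb{R}$. On the interval $x \in (0,1]$ corresponding to $t \in [0,\infty)$, the characteristic polynomial $\lambda_s(x)$ is strictly positive, since it is a sum of transition probabilities weighted by the nonnegative coefficients $b_{ij}^s$; hence no root $r_{s,i}$ lies in the interior of this interval, and there we may write $\log \lambda_s(x) = \log|c_s| + \sum_i \log|x - r_{s,i}|$.

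The key computation is that each summand is strictly concave in $x$, since $\frac{d^2}{dx^2}\log|x - r_{s,i}| = -\,(x - r_{s,i})^{-2} < 0$. Summing over the linear factors and over the site index $s$, the second derivative of $\ell$ with respect to $x$ is a sum of strictly negative terms (provided not every $\lambda_s$ is constant, i.e. $\ell$ is non-constant), so $\ell$ is strictly concave as a function of $x$ on $(0,1]$. I would emphasize that this is precisely where the ``summation of binary-like modules'' must be handled with care: incavity is \emph{not} preserved under addition in general, so one cannot simply add the incave certificates $\phi$ of the individual modules. The reparametrization to $x$ rescues the argument, because after passing to $x$ the modules become honestly concave, and concavity \emph{is} closed under addition.

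It then remains to return to $t$ and invoke Lemma~\ref{finite}. Since $x = e^{-\gamma t}$ is a strictly decreasing smooth bijection of $[0,\infty)$ onto $(0,1]$, the chain rule gives $\ell'(t) = -\gamma x\,\frac{d\ell}{dx}$ with $-\gamma x \neq 0$, so stationary points of $\ell$ in $t$ correspond exactly to stationary points in $x$, and the bijection carries global maxima to global maxima. A strictly concave function of $x$ has at most one stationary point, which is its unique global maximum (equivalently $\frac{d\ell}{dx}$ is strictly decreasing and vanishes at most once); consequently every stationary point of $\ell(t)$ is a global maximum, so $\ell$ is incave, and it has at most one — in particular finitely many — stationary points. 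Lemma~\ref{finite} then delivers the conclusion. The main obstacle I anticipate is not analytic but structural: verifying that $\lambda_s(x) > 0$ throughout $(0,1]$ so the logarithm splits into real concave pieces, and confirming that the incave property transfers correctly under the nonlinear reparametrization, rather than attempting the invalid step of adding incave certificates of the individual terms directly.
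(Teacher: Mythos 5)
Your proof is correct, and it takes a route that is equivalent in substance to the paper's but packaged quite differently. The paper works directly in the branch-length variable $t$: it factors each $\lambda_s$ into real linear factors $\alpha_{s,i}+\beta_{s,i}x$, applies the elementary inequality $\log y\le y-1$ to each factor, and verifies the incavity inequality $\ell(t)-\ell(u)\le \tfrac{1}{\gamma}\bigl(1-e^{-\gamma(t-u)}\bigr)\,\ell'(u)$ with a single common certificate $\phi$, after which it invokes Lemma~\ref{finite}. Your observation that incavity is not closed under addition --- but that here every module satisfies the inequality with the \emph{same} $\phi$ --- is precisely why the paper's term-by-term computation is legitimate: the certificate $\tfrac{1}{\gamma}\bigl(1-e^{-\gamma(t-u)}\bigr)$ is nothing but the first-order concavity condition $g(x)-g(y)\le g'(y)(x-y)$ in $x=e^{-\gamma t}$ pulled back to $t$. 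By proving honest strict concavity in $x$ via second derivatives you make this mechanism explicit, and as a bonus you render Lemma~\ref{finite} unnecessary: a strictly concave function of $x$ has at most one stationary point, which is its global maximum, and the monotone bijection $t\mapsto e^{-\gamma t}$ transports both properties back to $t$. Two minor points worth recording. First, strict positivity of $\lambda_s$ on $(0,1)$ follows from ergodicity ($P_{ij}(t)>0$ for $t>0$) together with the coefficients $b^s_{ij}\ge 0$ not all vanishing, though $\lambda_s(1)=\sum_i b^s_{ii}$ can be zero, so the endpoint $t=0$ may need separate handling. Second, the degenerate case in which every $\lambda_s$ is constant (so $\ell'\equiv 0$ and every point is stationary) escapes your strictness claim --- but the paper's proof has exactly the same blind spot, since its assertion that $\ell$ has only finitely many stationary points also fails there. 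Neither issue is a defect of your argument relative to the paper's.
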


\begin{proof}
Since $\lambda_s$ has only real roots, it can be written as product of linear functions
\[
\lambda_s(x) = \prod_{i=1}^{d_p}{(\alpha_{s,i} + \beta_{s,i} x)}
\]
where $d_p$, defined in Assumption~$\ref{as1}$, is the degree of the polynomial $\lambda_s$.

The log-likelihood function $\ell$ can be computed as
\begin{eqnarray*}
\ell(t)&=& \sum_{s=1}^S{\log{\left( \lambda_s(e^{-\gamma t})
\right)}}\\
&=& \sum_{s=1}^S{\log{\left(\prod_{i=1}^{d_p}{(\alpha_{s,i} + \beta_{s,i} e^{-\gamma t})}\right)}}\\
&=& \sum_{s=1}^S{{\sum_{i=1}^{d_p}{\log(\alpha_{s,i} + \beta_{s,i} e^{-\gamma t})}}}.
\end{eqnarray*}

For any $t,u>0$, we have
\begin{eqnarray*}
\ell(t)-\ell(u) &=& \sum_{s=1}^S{{\sum_{i=1}^{d_p}{\log\left(\frac{\alpha_{s,i} + \beta_{s,i} e^{-\gamma t}}{\alpha_{s,i} + \beta_{s,i} e^{-\gamma u}} \right)  }}} \\
&\le& \sum_{s=1}^S{{\sum_{i=1}^{d_p}{\left(\frac{\alpha_{s,i} + \beta_{s,i} e^{-\gamma t}}{\alpha_{s,i} + \beta_{s,i} e^{-\gamma u}} -1 \right)  }}}  \\
&=& \sum_{s=1}^m{{\sum_{i=1}^{d_p}{\left(\frac{\beta_{s,i} (e^{-\gamma t} - e^{-\gamma u} )  }{\alpha_{s,i} + \beta_{s,i} e^{-\gamma u}} \right)  }}}  \\
&=& \frac{1}{\gamma}\left(1-e^{-\gamma (t-u)}\right) \sum_{s=1}^S{{\sum_{i=1}^{d_p}{\left(\frac{-\beta_{s,i} ~\gamma e^{-\gamma u}  }{\alpha_{s,i} + \beta_{s,i} e^{-\gamma u}} \right)  }}}  \\
&=& \frac{1}{\gamma}\left(1-e^{-\gamma (t-u)}\right)  \ell'(u).
\end{eqnarray*}
Hence, $\ell$ is an incave function.

Furthermore, since $\lambda_s$ are polynomial and $e^{-\gamma t}$ is a bijective map from $[0, \infty)$ to $(0,1]$, we deduce that $\ell(t)$ only has a finite number of stationary points.
Using Lemma~\ref{finite}, we conclude that $\ell$ has at most one stationary point; moreover, if such a point exists, it is also a global maximum.

\end{proof}

We note that Theorem~$\ref{Theo1}$ imposes a condition on the characteristic polynomials rather than the evolutionary model, and can be applied to assess the uniqueness of the stationary point of any time-reversible evolutionary model satisfying Assumption~$\ref{as1}$.
In fact, Theorem~$\ref{Theo1}$ does not assume a uniform molecular clock, or even a single evolutionary model along the edges of the tree. However, it is worth noting that for the class of models on which the rate matrices have only one non-zero eigenvalue, the result automatically holds:

\begin{Corollary}
For binary, Jukes-Cantor and Felsenstein 1981 models, the one\hyp dimensional likelihood function has at most one stationary point; if such point exists, it is the global maximum.
\label{model}
\end{Corollary}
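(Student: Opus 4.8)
The plan is to deduce the corollary as a direct specialization of Theorem~\ref{Theo1}, whose hypothesis requires only that each characteristic polynomial $\lambda_s$ have real roots. The whole argument therefore reduces to showing that, for each of the three named models, the $\lambda_s$ are polynomials of degree at most one, for which the condition on roots is automatic. The bridge between the models and the degree of $\lambda_s$ is the spectral structure of the rate matrix $Q$, encoded in Assumption~\ref{as1} through the integers $d_1,\dots,d_{r-1}$.

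First I would record the eigenvalue structure of each rate matrix. For a binary model we have $r=2$, so $Q$ has the single nonzero eigenvalue $-d_1\gamma$ and there is nothing further to check. For the Jukes--Cantor model the rate matrix is proportional to $\tfrac1r\mathbf{1}\mathbf{1}^{\top}-I$, and for the Felsenstein 1981 model it has the form $\mu(\mathbf{1}\pi^{\top}-I)$; in both cases a short computation shows that $\mathbf{1}$ spans the kernel (the eigenvalue $0$) while every vector in the appropriate complementary $(r-1)$-dimensional subspace is an eigenvector for a single common nonzero eigenvalue. Hence all nonzero eigenvalues coincide, i.e.\ $d_1=d_2=\dots=d_{r-1}$.

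Next I would translate this coincidence into the polynomial representation. After normalizing $\gamma$ so that the common nonzero value is $-\gamma$, Assumption~\ref{as1} gives $d_0=0$ and $d_1=\dots=d_{r-1}=1$. Substituting into $P_{ij}(x)=\sum_k m_{ik}x^{d_k}n_{kj}$ collapses the sum to $P_{ij}(x)=m_{i0}n_{0j}+x\sum_{k\ge 1}m_{ik}n_{kj}$, so each transition polynomial is affine in $x$. Because $\lambda_s(x)=\sum_{ij}b^s_{ij}P_{ij}(x)$ is a linear combination of these affine polynomials, it is itself of degree at most one; a polynomial of degree at most one is either constant (no roots) or has exactly one real root, so in all cases $\lambda_s$ has only real roots. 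Applying Theorem~\ref{Theo1} then yields the stated conclusion.

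The argument is essentially mechanical once Theorem~\ref{Theo1} is in hand, so I do not expect a genuine obstacle; the only point requiring care is the eigenvalue computation for the Felsenstein 1981 model, where the nonzero eigenspace is $\{v:\pi^{\top}v=0\}$ rather than the orthogonal complement of $\mathbf{1}$, and one must confirm that this $(r-1)$-dimensional space is indeed a single eigenspace so that no second nonzero eigenvalue---and hence no higher-degree factor of $\lambda_s$---can appear.
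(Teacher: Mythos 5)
Your proposal is correct and follows exactly the route the paper intends: the remark immediately preceding the corollary notes that for rate matrices with only one distinct non-zero eigenvalue the result follows automatically from Theorem~\ref{Theo1}, and your argument simply fills in the eigenvalue computations showing that binary, Jukes--Cantor, and Felsenstein 1981 matrices all have this property, so each $\lambda_s$ is affine in $x$ and trivially has only real roots. The care you take with the F81 eigenspace $\{v:\pi^{\top}v=0\}$ is a correct and worthwhile detail that the paper leaves implicit.
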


We also note that the results in previous studies about the number of maxima of likelihood surfaces  \cite{steel1994maximum,chor2000multiple, rogers1999multiple} are derived for binary models.
Theorem~$\ref{Theo1}$ complements those results in the sense that while the likelihood surfaces considered in those work may have multiple (or even a continuum of) local maxima, the stationary points of one\hyp dimensional likelihood functions are still unique.

This corollary also extends and clarifies a result from the first attempt to investigate the shape of the one\hyp dimensional phylogenetic likelihood functions \cite{fukami1989maximum}.
By studying the location of the solutions of phylogenetic likelihood functions, the paper proves that one\hyp dimensional phylogenetic likelihood functions have unique stationary points under the same model assumptions as Corollary $\ref{model}$.

This result also provides a full characterization of one\hyp dimensional likelihood functions of binary models (and those considered by Corollary~$\ref{model}$).
Indeed, since the derivatives of log-likelihood functions are continuous with at most one zero, this result implies that:
\begin{enumerate}
\item[1. ] If there is no stationary point, then $\ell(t)$ is a monotonic function (either strictly decreasing or strictly increasing).
\item[2. ] If the stationary point $t_0$ exists and is unique, then the function is increasing in the interval $(0, t_0)$ and is decreasing in $(t_0, \infty)$.
\end{enumerate}

This simplicity of the shapes of phylogenetic likelihood functions provides a strong theoretical foundation for the use of simple optimization methods to locate the maximum likelihood branch length.
However, we emphasize that these results are only about one\hyp dimensional phylogenetic likelihood functions and do not mean that there is a unique (multivariate) stationary point of the likelihood surface or that simple hill-climbing methods will find this optima.

\section{Algebraic structures on the space of all logarithmic relative frequency patterns under the Kimura 2-parameter model}
\label{sec:algebra}

While Section 3 provides a uniqueness result for the maximum likelihood branch lengths under three simple models, the result does not extend to more general models.
In fact, as we will illustrate in the next section, the shapes of likelihood functions under the Kimura 2-parameter model \cite{kimura1980simple} can be quite complicated, for example with multiple local and global maxima.

In order to enable theoretical analyses of phylogenetic likelihood functions under more complex evolutionary models, here we introduce the concept of conditional logarithmic frequency patterns and study the algebraic structures on the space of such patterns.

\begin{Definition}
 Given a rooted tree $\tau$ with root $\rho$ and $N$ taxa, some labelings $~\psi$=$(\psi_1, \ldots, \psi_S)\in \Omega^{N \times S}$ of its taxa and a vector of real constants $(c_1, \ldots, c_S)$ we define the \emph{logarithmic relative frequency pattern} $\phi(\tau, \psi, c)$ as the $r \times S$ matrix with entries
\[
\phi_{i,s} = c_s +  {\log  \sum_{a \in \mathcal{Z}_{i,s}}{ \pi(i) \prod_{(u,v) \in E(\tau)}{P^{uv}_{a_ua_v}(t_{uv})}}}
\]
for $i\in \Omega$, $s=1,\ldots,S$  and  $\mathcal{Z}_{i,s}$ being the set of all extensions $a$ of $\psi_s$ to all the nodes of $\tau$ such that $a(\rho)=i$.
\end{Definition}
For convenience, we will use the shorter term \emph{frequency pattern} to refer to a logarithmic relative frequency pattern.

In probabilistic terms, for a fixed site index $s$, the $(i,s)$-entry of a logarithmic relative frequency pattern $\phi(\tau, \psi, c)$ is (up to a constant $c_s$) the logarithm of the likelihood of observing state $i$ at the root of the tree, given leaf states $\psi_s$.
This definition is directly related to the formulation of the characteristic polynomials $\lambda_s$, whose coefficients $b_{ij}^s$ are the product of the probabilities of observing state $i$ and $j$ at the two end points of an edge, given that the labeling $\psi_s$ is observed at the taxa.
It is straightforward to verify that for models with uniform stationary distribution on a fixed tree, we have
\[
\log b_{ij}^s = \phi_{i,s} (\tau_1) + \phi_{j,s}(\tau_2) + \tilde c_s
\]
for all $i, j, s$, where $\tilde c_s$ is a constant depending only on $s$, and $\tau_1$ and $\tau_2$ are the trees obtained by removing the edge $e_0$ from the tree $\tau$ and rooting the newly created trees at the endpoints of $e_0$ (see the proof of Theorem 4.1 in the Appendix for more details).

Hence, to characterize the space of all phylogenetic characteristic polynomials under a given evolutionary model, we just need to characterize the space of all possible logarithmic relative frequency patterns under that model.

\begin{Definition}
We denote the space of all possible logarithmic relative frequency patterns under the Kimura 2-parameter model  by
\[
G = \{ \phi(\tau, \psi, c): \tau \in \mathcal{T}, \psi \in \Psi_{\tau}^S, c \in \mathbb{R}^S\}
\]
where $\mathcal{T}$ denotes the set of all rooted trees and $ \Psi_\tau^S$ denotes the set of all tuples $(\psi_1, \ldots, \psi_S)$ of $S$ labelings of the taxa of $\tau$.
\end{Definition}

The goal of this section is to establish that for any sequence of $S$ column vectors $v_1, v_2, \ldots, v_S$ in $\mathbb{R}^{4}$, there exists a tree $\tau$ under the Kimura 2-parameter model, labelings $\psi=(\psi_1, \psi_{2}, \ldots, \psi_S)$ of its taxa and a vector of real constants $c$ such that
\[
\phi(\tau, \psi,c) = [v_1~~ v_2~~\ldots~~v_S].
\]

The existence of such tree is guaranteed indirectly by proving that under the Kimura 2-parameter model:
\begin{itemize}
\item[ 1. ] $G$ is an algebraic subgroup of $(\mathbb{R}^{4\times S}, +)$.
\item[ 2. ] $G$ is a linear subspace of $\mathbb{R}^{4\times S}$.
\item [ 3. ] $G$ is equal to $\mathbb{R}^{4\times S}$ itself.
\end{itemize}

Noting again that the stationary distribution of the Kimura 2-parameter model is the uniform distribution across states $\pi = (1/4,1/4,1/4,1/4)$, the first two steps are confirmed by the following theorem.
\begin{Theorem}

If the stationary frequency of the evolutionary model is the same for every state, then the following properties hold:
\begin{itemize}
\item[1. ] $(G,+)$ is a subgroup of $(\mathbb{R}^{4\times S}, +)$.
\item[2. ] $G$ is path-connected.
\item[3. ] $G$ is a linear subspace of $\mathbb{R}^{4\times S}$.
\end{itemize}
\label{group}
\end{Theorem}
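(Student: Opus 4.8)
The plan is to reduce everything to the conditional log-likelihood at the root and then exploit two features of the uniform-stationary-frequency setting: likelihoods multiply when trees are glued at the root, and the model's state symmetry lets us cancel the per-state contribution. Writing $\pi(i)=1/r$, the $(i,s)$ entry of $\phi(\tau,\psi,c)$ is $\phi_{i,s}=c_s'+\log L_i^{(s)}$, where $L_i^{(s)}=\sum_{a\in\mathcal{Z}_{i,s}}\prod_{(u,v)}P^{uv}_{a_ua_v}$ is the conditional likelihood of $\psi_s$ given root state $i$ and $c_s'=c_s+\log\pi(i)$ absorbs the state-independent factor $\pi(i)$. Thus each column of $\phi$ is pinned down only modulo the all-ones direction, and this freedom in $c$ is what makes the constructions below go through.

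First I would prove closure under addition. Given $\phi(\tau_1,\psi^{(1)},c^{(1)})$ and $\phi(\tau_2,\psi^{(2)},c^{(2)})$, I form a single tree $\tau$ by identifying the two roots into one root $\rho$ and concatenating the leaf labelings site by site, so that site $s$ of $\tau$ carries $(\psi^{(1)}_s,\psi^{(2)}_s)$. Because the two subtrees are conditionally independent given the shared root state, the conditional likelihood factorizes as $L_i^{(s)}(\tau)=L_i^{(s)}(\tau_1)\,L_i^{(s)}(\tau_2)$, hence $\log L_i^{(s)}(\tau)=\log L_i^{(s)}(\tau_1)+\log L_i^{(s)}(\tau_2)$. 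Here the uniform-frequency hypothesis is essential: the only discrepancy between the glued pattern and the sum of the two summands is a spare factor $\pi(i)$, which is independent of $i$ and is absorbed into the new constant $c_s$. Choosing $c$ accordingly gives $\phi(\tau,\psi,c)=\phi(\tau_1,\cdot)+\phi(\tau_2,\cdot)$, so $(G,+)$ is a submonoid.

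Second come the identity and inverses. The zero matrix lies in $G$: the single-vertex tree has an empty edge-product, so $L_i^{(s)}\equiv 1$ and $\phi_{i,s}=c_s+\log\pi(i)$ is constant in $i$, killed by $c_s=\log r$. The crux of the theorem is closure under negation, and this is where the model's state symmetry enters. For a model whose symmetry group acts transitively on $\Omega$ while preserving the rate matrix --- the Klein four-group $\{A\!\leftrightarrow\!G,\ C\!\leftrightarrow\!T,\ \ldots\}$ in the Kimura case --- summing the single-site pattern of a fixed tree over all $r^N$ leaf labelings yields a vector whose $i$-th entry $\sum_{\psi'}\log L_i(\psi')$ is independent of $i$, since any two root states are interchanged by a symmetry that merely permutes the labelings being summed. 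Hence this total sum is $0$ in $G$, and consequently $-\phi(\tau,\psi_0)=\sum_{\psi'\neq\psi_0}\phi(\tau,\psi')$ lies in $G$ for every single-site labeling $\psi_0$. To negate a full $r\times S$ pattern simultaneously I would glue $r^N-1$ copies of $\tau$ and, in the $k$-th copy, assign to site $s$ the $k$-th labeling in an enumeration of (all labelings of $\tau$) $\setminus\{\psi_s\}$; because gluing adds the columns site by site and the count $r^N-1$ is the same for every site, the glued pattern equals $-\phi(\tau,\psi,c)$ up to an absorbable constant. With addition and negation in hand, $(G,+)$ is a subgroup. I expect this negation step to be the main obstacle: for a general uniform-frequency (equivalently symmetric-$Q$) model lacking a transitive symmetry, the sum over labelings need not be state-independent, so either a genuinely different argument or the model symmetry exploited here is what carries the proof.

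Finally I would establish path-connectedness and the subspace property. The pattern $\phi(\tau,\psi,c)$ is a continuous, indeed polynomial, function of the reparametrized branch lengths $x_e=e^{-\gamma t_e}\in(0,1]$; letting all $x_e\to 0$ sends each transition polynomial $P_{ij}$ to the constant $\pi_j$ and hence each column of $\phi$ to a state-independent vector, i.e.\ to $0$ after absorbing the constant. This yields, for every element of $G$, a continuous path inside $G$ joining it to the identity $0\in G$, so $G$ is path-connected. For the linear-subspace claim I would avoid a separate tree construction and instead argue topologically: a path-connected subgroup of the vector group $(\mathbb{R}^{4\times S},+)$ is a Lie subgroup (Yamabe's theorem), and the Lie subgroups of a vector group are precisely its linear subspaces. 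Thus the subgroup and path-connectedness established above immediately upgrade $G$ to a linear subspace, completing all three parts.
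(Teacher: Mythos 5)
Your proposal is correct and follows essentially the same route as the paper's proof: addition by gluing roots and factorizing the conditional likelihood, inverses by exploiting a transitive state symmetry of the model, path-connectedness by continuously deforming branch lengths toward the stationary regime, and the final upgrade via the fact that a path-connected subgroup of $\mathbb{R}^n$ is a linear subspace (the paper cites Hayashida's arcwise-connected-subgroup theorem where you invoke Yamabe). The differences are in detail only: for inverses the paper glues $r$ copies of the tree relabelled by the powers of a single order-$r$ permutation that preserves the rate matrix, rather than your $r^N-1$ copies ranging over all complementary labelings, and for connectedness it joins two arbitrary patterns through a new root with edge lengths $t$ and $1/t$ rather than contracting each element to $0$; both variants are sound. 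Your caveat that the negation step really uses a transitive symmetry of $Q$ rather than merely uniform stationary frequencies is well taken --- the paper's own proof has exactly the same reliance when it asserts the glued pattern is state-independent ``because of symmetry,'' so this is a limitation of the theorem's stated hypothesis rather than a defect of your argument.
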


\begin{figure}
\centering
  \includegraphics[width=0.8\linewidth]{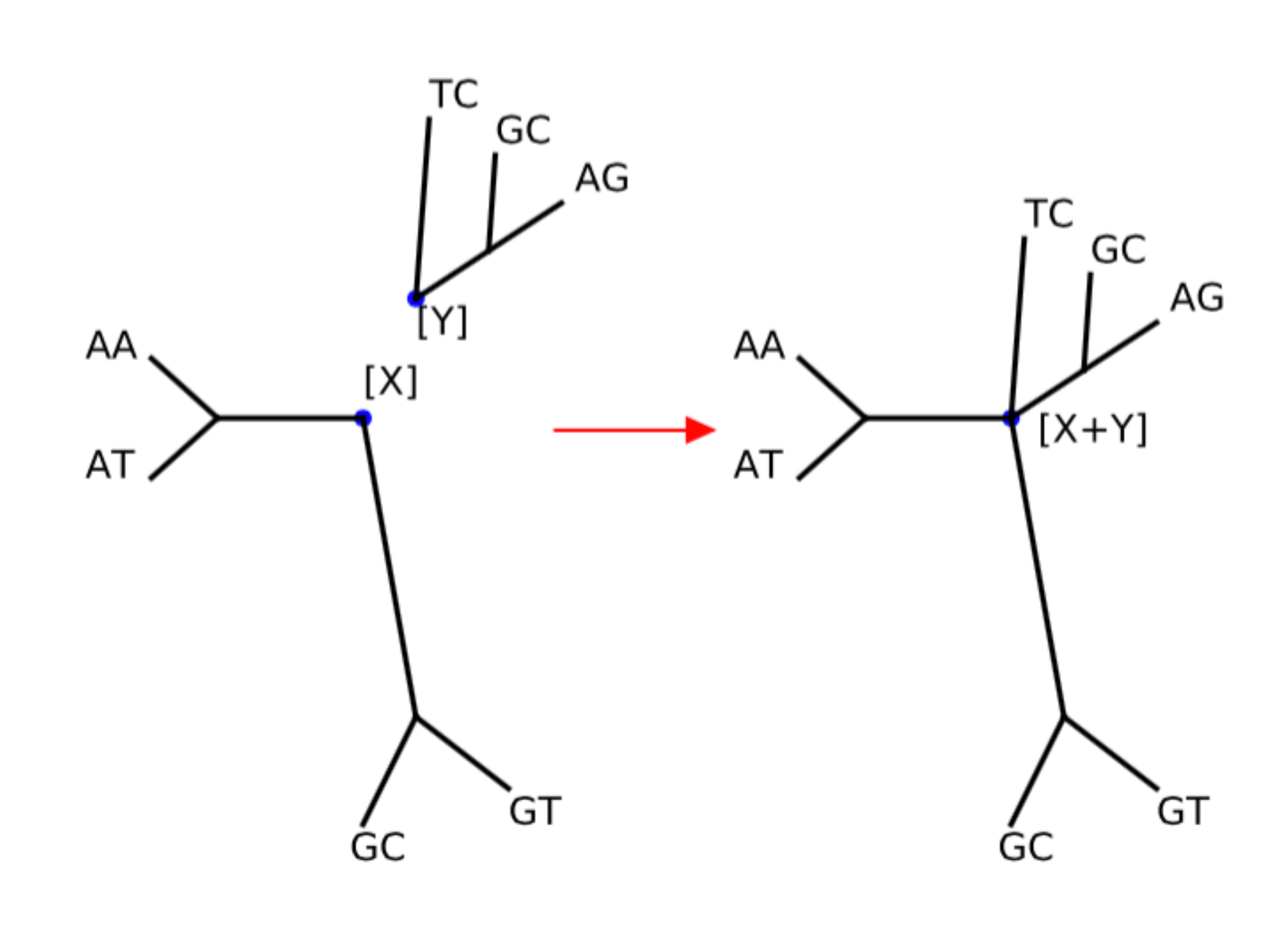}
  \caption{$G$ is closed under addition: we can add two frequency patterns $[X]$ and $[Y]$ just by gluing the roots of the two corresponding trees, labeling the taxa of $\tau$ correspondingly and taking the pattern at the new root.}
  \label{addition}
\end{figure}

\begin{figure}
\centering
  \includegraphics[width=0.6\linewidth]{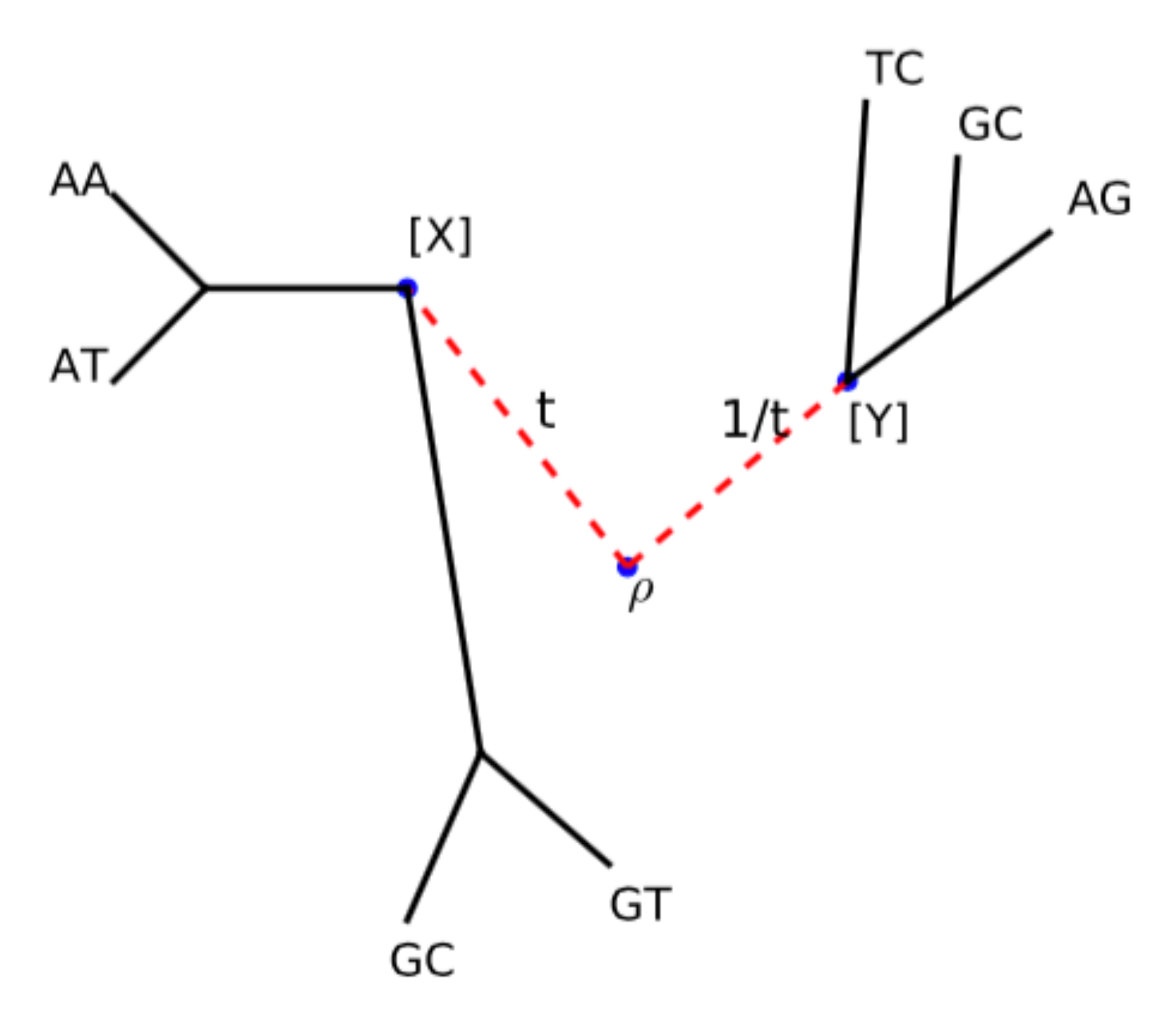}
  \caption{$G$ is path-connected: We can connect any two patterns $[X]$ and $[Y]$ by adding a new root $\rho$, joining it with the roots of the two corresponding trees with two new edges of length $t$ and $1/t$, respectively, and making $\rho$ the root of $\tau$.}
 \label{connect}
\end{figure}

\begin{proof}[Sketch of proof]
A detailed proof of this Theorem is provided in the Appendix, but the main arguments can be simply illustrated.
The fact that $G$ is closed under addition follows because we can add two frequency patterns just by gluing the roots of the two corresponding trees, labeling the taxa of $\tau$ correspondingly and taking the pattern at the new root (Figure $\ref{addition}$).
Similarly, we can create the inverse of a pattern by gluing all permuted versions of its corresponding tree (with an appropriate vector of real constants).

To prove that $G$ is path-connected, given two arbitrary trees with roots $\rho_1, \rho_2$, we create a new tree  by adding a new root $\rho$, joining $\rho_1, \rho_2$ with $\rho$ by two new edges of length $t$ and $1/t$, respectively, and making $\rho$ the root of $\tau$ (Figure $\ref{connect}$).
By varying $t$ continuously from zero to infinity, we can make a continuous path in $G$ that connects the two frequency patterns.
Since any path-connected subgroup of $\mathbb{R}^n$ is a linear subspace \cite{hayashida1949arc}, so is $G$.

\end{proof}

We note that although the aforementioned arguments are made for the Kimura 2-parameter model, which describes a model of DNA evolution ($r=4$), Theorem~$\ref{group}$ only requires that the stationary frequency of the evolutionary model is the same for every state.
Hence, this result also extends to models with more parameters.

Similarly, the fact that $(G,+)$ is a subgroup of $(\mathbb{R}^{r\times S}, +)$ can be established under the assumption that the root distribution $\pi$ is uniform, without assuming that it is the stationary distribution of the evolutionary process.
However, our current approach requires the uniform root distribution to be the stationary distribution for the proof of path-connectivity of $G$, and an alternative approach to the proof of path-connectivity will be needed if we want to extend the analyses to a more general framework.

Recalling that the Kimura 2-parameter model corresponds to the uniform stationary distribution and a family of rate matrices $Q_{\kappa}$ indexed by $\kappa$, the transition/transversion rate ratio, we then establish that when $\kappa=3$, the space of all frequency pattern $G=\mathbb{R}^{4\times S}$.
The proof is done through proving by induction that $G$ contains $4\times S$ independent frequency patterns (also proven in the Appendix):

\begin{Theorem}
The set of all possible logarithmic conditional frequency patterns with $S$ sites under the Kimura 2-parameter model with $\kappa=3$ is equal to $\mathbb{R}^{4\times S}$.
\label{Kimura}
\end{Theorem}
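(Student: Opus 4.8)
The plan is to build everything on Theorem~\ref{group}, which tells us that $G$ is a linear subspace of $\mathbb{R}^{4\times S}$ and hence is closed under \emph{arbitrary real} linear combinations of its elements. Because of this, I do not need to realize a target pattern directly: it suffices to exhibit $4S$ linearly independent patterns in $G$, i.e.\ to show that the explicitly realizable patterns span all of $\mathbb{R}^{4\times S}$. I would organize the argument around one very simple ``gadget'' tree together with a decoupling step that distributes information across the $S$ sites, which is the concrete form of the induction referred to in the statement.

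First I would record the transition polynomials at $\kappa=3$. Diagonalizing $Q_3$ using the characters of $\mathbb{Z}_2\times\mathbb{Z}_2$ gives eigenvalues $0,-\tfrac12,-1,-1$, so under the reparametrization $x=e^{-\gamma t}$ with $\gamma=\tfrac12$ (consistent with Assumption~\ref{as1}) the transition probabilities are the quadratics
\[
p_0(x)=\tfrac14(1+x+2x^2),\qquad p_1(x)=\tfrac14(1+x-2x^2),\qquad p_2(x)=\tfrac14(1-x),
\]
for an unchanged state, a transition, and a transversion, respectively. Now let $\tau$ be the tree consisting of a single pendant edge (or a cherry, if one insists on at least two taxa). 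For a single site with leaf state $j$, the corresponding column of $\phi(\tau,\psi,c)$ equals $c\,\mathbf{1}+u^{(j)}$, where $u^{(j)}_i=\log P_{ij}(x)$. Writing $u^A=(a,b,e,e)$, $u^G=(b,a,e,e)$, $u^C=(e,e,a,b)$, $u^T=(e,e,b,a)$ with $a=\log p_0$, $b=\log p_1$, $e=\log p_2$, the differences $u^A-u^G$, $u^C-u^T$, and $(u^A+u^G)-(u^C+u^T)$, together with the vector $\mathbf{1}$ obtained by varying $c$, are proportional to the four orthogonal eigen-directions $(1,-1,0,0)$, $(0,0,1,-1)$, $(1,1,-1,-1)$, $(1,1,1,1)$. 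These span $\mathbb{R}^4$ exactly when $p_0\neq p_1$ and $p_0p_1\neq p_2^2$; the explicit formulas above show the first holds for all $x$ and the second fails only at $x\in\{0,1\}$, so any branch length $t\in(0,\infty)$ works.

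Next I would decouple the sites. Since the tuple $\psi=(\psi_1,\dots,\psi_S)$ assigns an \emph{independent} leaf-labeling to each site and $c=(c_1,\dots,c_S)$ is chosen per site, a single fixed tree $\tau$ realizes the entire product $W(\tau)^S$, where $W(\tau)$ is the set of single-column patterns realizable on $\tau$. For each site $s$ I would form the pattern whose $s$-th column is $u^{(j)}$ while every other column is held at a common fixed value $u^{(j_0)}$, and subtract the all-$u^{(j_0)}$ pattern; inside the subspace $G$ this cancels all other columns and leaves a pattern supported on column $s$ with value $u^{(j)}-u^{(j_0)}$. Ranging over the four states $j$ and over $s$, and taking real combinations, this produces the four eigen-directions in each of the $S$ columns, i.e.\ $4S$ independent patterns, whence $G=\mathbb{R}^{4\times S}$.

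The genuinely delicate point, and the one I expect to be the main obstacle, is the decoupling rather than the gadget computation. Projecting $G$ onto any single column is trivially surjective once the single-site span is full, but column-wise surjectivity does \emph{not} imply $G=\mathbb{R}^{4\times S}$, since a proper ``diagonal'' subspace would also surject onto every column. What makes the reduction go through is precisely that one tree supports independent labelings at distinct sites, so its realized patterns fill the full product $W(\tau)^S$, and that the within-tree label variation already spans $\mathbb{R}^4$; only then does the difference trick separate the columns exactly. I would therefore take particular care to confirm the affine independence of $u^A,u^G,u^C,u^T$ for the chosen $x$ (a single $4\times4$ determinant / the nondegeneracy conditions $p_0\neq p_1$ and $p_0p_1\neq p_2^2$), since the entire argument rests on this rank-$4$ fact holding for the gadget tree.
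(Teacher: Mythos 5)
Your proposal is correct, and while it shares the paper's overall strategy---exhibit $4S$ linearly independent patterns realizable on single-edge trees and then invoke the linear-subspace structure of $G$ from Theorem~\ref{group}---the way you establish independence is genuinely different. The paper proceeds by induction on $S$: it fixes three branch lengths $x\in\{1/4,1/2,3/4\}$, verifies the base case $S=1$ by a computer check on a $12\times 4$ matrix, and carries out the inductive step via a block-matrix rank computation. You instead give a closed-form, computer-free argument: the symmetry of the Kimura model collapses the single-site span to the four orthogonal directions $(1,-1,0,0)$, $(0,0,1,-1)$, $(1,1,-1,-1)$, $(1,1,1,1)$, with explicit nondegeneracy conditions $p_0\neq p_1$ and $p_0p_1\neq p_2^2$ that you verify hold for every $x\in(0,1)$, so a single branch length suffices; and you decouple the $S$ columns directly by subtracting two realizable patterns that differ only at one site, which is legitimate because the columns of $\phi(\tau,\psi,c)$ depend on $(\psi_s,c_s)$ independently and $G$ is closed under subtraction. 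Your route buys transparency---one sees exactly which algebraic facts about the transition polynomials are used, and the argument would extend to other values of $\kappa$ satisfying the two nondegeneracy conditions---at no cost in rigor. One point worth keeping explicit in any write-up: the leaf-label differences $u^{(j)}-u^{(j_0)}$ span only the three-dimensional subspace orthogonal to $\mathbf{1}$, so the per-site constants $c_s$ are genuinely needed to supply the fourth direction in each column; you use them correctly, but this is precisely the step where a careless version of either argument would lose a dimension.
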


With those results, we finally can establish the main theorems of the section.

\begin{Theorem}
 For any sequence of column vectors $v_1, v_2, \ldots, v_S$ in $\mathbb{R}^{4}$, there exists a rooted tree $\tau$ under the Kimura 2-parameter model with $\kappa=3$, $S$ labelings $\psi_1, \psi_{2}, \ldots, \psi_S $ of its taxa, and a vector of real constants $c$ such that
 \[
\phi(\tau, \psi,c) = [v_1~~ v_2 \ldots v_S].
\]
\label{main}
\end{Theorem}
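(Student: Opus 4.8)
The plan is to recognize that this statement is essentially a reformulation of Theorem~\ref{Kimura}, translating the abstract equality $G = \mathbb{R}^{4\times S}$ into an explicit existence claim about the data $(\tau, \psi, c)$. Given an arbitrary sequence of column vectors $v_1, \ldots, v_S \in \mathbb{R}^4$, I would first assemble them into the single matrix $V = [v_1 ~~ v_2 \ldots v_S]$, which by construction is an element of $\mathbb{R}^{4\times S}$ since each $v_i$ has height $r=4$ and there are $S$ columns. The entire content of the theorem is the assertion that this particular $V$ lies in the set $G$ of realizable frequency patterns.

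The key step is then to invoke Theorem~\ref{Kimura}, which states that under the Kimura 2-parameter model with $\kappa = 3$ the space of realizable frequency patterns satisfies $G = \mathbb{R}^{4\times S}$. Since $V \in \mathbb{R}^{4\times S} = G$, and $G$ is by its defining Definition exactly the set $\{\phi(\tau, \psi, c) : \tau \in \mathcal{T},\ \psi \in \Psi_\tau^S,\ c \in \mathbb{R}^S\}$, membership $V \in G$ unwinds immediately to the existence of a rooted tree $\tau$, labelings $\psi = (\psi_1, \ldots, \psi_S)$ of its taxa, and a vector of real constants $c$ satisfying $\phi(\tau, \psi, c) = V$. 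This is precisely the tuple demanded by the statement, so no further construction is required at this stage.

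I expect no genuine obstacle in the present argument: all of the difficulty has been pushed upstream into the earlier results, and once those are granted the theorem is a one-line consequence of the definition of $G$. The substantive work lies in Theorem~\ref{group}, which establishes that $G$ is a linear subspace of $\mathbb{R}^{4\times S}$ via the gluing construction for closure under addition and the construction joining two roots by edges of length $t$ and $1/t$ for path-connectivity, and in Theorem~\ref{Kimura}, whose inductive argument produces $4\times S$ linearly independent frequency patterns and thereby forces the subspace $G$ to exhaust all of $\mathbb{R}^{4\times S}$. The only care needed here is to confirm that the ambient dimension in Theorem~\ref{Kimura} matches the $4\times S$ shape of $V$, which it does since the Kimura 2-parameter model has $r=4$ states.
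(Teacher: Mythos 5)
Your proposal is correct and matches the paper exactly: the paper explicitly frames Theorem~\ref{main} as an indirect consequence of showing $G$ is a subgroup, then a linear subspace, then all of $\mathbb{R}^{4\times S}$ (Theorems~\ref{group} and~\ref{Kimura}), after which the statement unwinds from the definition of $G$ with no further argument. The dimension check you note ($r=4$ states, $S$ sites) is the only detail to verify, and it holds.
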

While Theorem~$\ref{main}$ provides a theoretical guarantee about the existence of a tree under the Kimura model with a given frequency patterns, the proof is not constructive.
This raises some concerns about the practicality of the approach.
For example, one can not derive an estimation of the number of edges required to produce a given frequency pattern.
Those concerns are addressed by the following theorem.

\begin{Theorem}
A tree as in Theorem~$\ref{main}$ can be constructed with at most $64S$ edges.
\label{main2}
\end{Theorem}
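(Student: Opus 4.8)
The plan is to give a direct, modular construction that realizes the target matrix $[v_1~\cdots~v_S]$ one column at a time and to control the edge count by building each column from a bounded number of elementary gadgets. The organizing principle is the additivity of frequency patterns under root identification that underlies part~1 of Theorem~\ref{group}: if several rooted trees are glued at a common root, then, because the stationary distribution is uniform, the pattern of the glued tree is the sum of the patterns of the pieces, up to a per-column constant that is absorbed into $c$, and \emph{no new edges are created}. It therefore suffices to build, for each site $s$, a \emph{module} whose site-$s$ pattern equals $v_s$ and whose pattern for every other site $s'\neq s$ is constant across states, hence zero in $\mathbb{R}^{4}/\langle\mathbf{1}\rangle$ and absorbable into $c_{s'}$, using at most $64$ edges. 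Gluing the $S$ modules at one root then yields a tree with at most $64S$ edges.

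The device that makes a module simultaneously informative for its own site and inert for all others is a \emph{neutral block}: four leaves attached at one common branch length and labeled with the four distinct states of $\Omega$. Exploiting the symmetry of $Q_{\kappa}$, for every root state $i$ the product $\prod_{j\in\Omega} P_{ij}(x)$ equals the same value (one ``stay'', one transition, and two transversion factors) independently of $i$, so such a block contributes a constant to the conditional root pattern. Since leaf labels are assigned per site, each block of four equal-length leaves can be labeled as a permutation of $\Omega$ for every site other than $s$ (making it neutral there) while being labeled by an arbitrary multiset of states for site $s$ (making it informative there). Organizing every module into blocks of four equal-length leaves thus decouples the neutrality constraints across sites completely, and the whole construction collapses to a single star whose only edges are its pendant edges.

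It then remains to prove a single-site realization with a uniform edge bound: for any $v\in\mathbb{R}^{4}$ there is a star of at most sixteen blocks (so at most $64$ leaves, hence $64$ edges) whose site pattern equals $v$ modulo $\langle\mathbf{1}\rangle$. Here a block with branch-length parameter $x$ and label multiset $\{j_1,\dots,j_4\}$ contributes the vector $w_i=\sum_{m}\log P_{ij_m}(x)$, and one must show that sums of at most sixteen such vectors cover $\mathbb{R}^{4}/\langle\mathbf{1}\rangle\cong\mathbb{R}^{3}$. Reachability in principle is already guaranteed by Theorem~\ref{Kimura} (and Theorem~\ref{main}); the new content is the count. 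I would obtain it by analyzing the quadratic transition polynomials of the $\kappa=3$ model and observing that, as $x\to1$, an all-$j$ block attains unbounded magnitude in the direction $e_j-\tfrac14\mathbf{1}$; the four such directions sum to zero and positively span $\mathbb{R}^{3}$, so a Carath\'eodory-type argument realizes an arbitrary magnitude in any direction with a fixed number of blocks, while a bounded correction term passes from the limiting rays to exact equality and is reabsorbed by the choice of $c$.

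The hard part is precisely this last, quantitative step. Establishing that a \emph{fixed} number of blocks suffices for \emph{every} target $v$---including targets of arbitrarily large norm and every sign pattern---requires more than the qualitative spanning furnished by Theorem~\ref{Kimura}: one must control both the direction and the magnitude of the achievable block vectors uniformly in the target, and convert the $x\to1$ ray picture into an exact realization with no loss in the edge budget. I expect the explicit form of the $\kappa=3$ transition polynomials, combined with a careful positive-spanning and Carath\'eodory estimate, to pin down the per-module constant $64$, and hence the overall bound $64S$.
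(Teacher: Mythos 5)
Your high-level architecture is essentially the paper's: build small gadgets that are informative for one site (or one entry) and neutral for the others, exploit the $K2P$ symmetry via blocks labeled by all four states (the paper uses four cyclically permuted copies for the same purpose), and glue everything at a single root so that patterns add without creating new edges. The arithmetic $16\times 4S=64S$ is also the same. But the proposal leaves the one step that actually carries the content of the theorem unproven: the \emph{exact} realization of an arbitrary target with a fixed edge budget. You explicitly defer this to a ``positive-spanning and Carath\'eodory estimate'' on the $x\to 1$ asymptotics, and that sketch does not close. Carath\'eodory for cones tells you which limiting rays to combine, but each actual block contributes its ray direction only up to an $O(1)$ residual that is \emph{not} proportional to $\mathbf{1}$ (near $x=1$ the transition and transversion probabilities vanish at different rates, $\tfrac34(1-x)$ versus $\tfrac14(1-x)$, so the residual genuinely separates states). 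Such residuals cannot be ``reabsorbed by the choice of $c$,'' since $c$ only absorbs per-site constant vectors; they must be cancelled by additional tree structure, and no mechanism for doing so is given. Nor is the claimed bound of sixteen blocks per module derived from the argument --- it is back-filled from the target constant $64$.

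The paper avoids all of this with a concrete two-parameter construction. Its Step~1 takes a single $A$-labeled leaf of length $t$ and the three-edge ``inverse'' of a $G$-labeled leaf of length $s$ (four edges total), observes that the third and fourth coordinates of the difference vanish identically by symmetry, and then uses the intermediate value theorem twice: once to choose $s=k(t)$ killing the second coordinate exactly, and once to show the surviving first coordinate $\log g(t)$ sweeps all of $(0,\infty)$ (with the reversed difference giving $(-\infty,0)$). This yields \emph{every} pattern $[x,0,0,0]$ exactly with four edges; sixteen edges then isolate any single $(i,s)$ entry via the permutation trick, and $4S$ entries give $64S$. If you want to keep your per-site modular organization, you still need a quantitative lemma of this kind --- an exact, uniformly bounded realization of a one-dimensional family of targets --- and the IVT-style cancellation is the missing idea.
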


Not only does the theorem provide an upper bound on the number of edges required to construct a tree with a given frequency pattern, its proof also provides a simple algorithm to construct such a tree.

\begin{proof}[Proof of Theorem~$\ref{main2}$]

The main steps of the proof are as follows:

\begin{itemize}
\item[Step 1. ] As shown in the Appendix, any frequency pattern of the form $[x, 0, 0, 0 ]^t$ can be produced (up to a real constant $c_1$) by a tree $\tau$ with 4 edges and some labeling $\psi$ of its taxa.

\item[Step 2. ] Using $\tau$ from Step 1, we create a tree $\tau'$ of 16 edges by gluing the roots of 4 different versions $\tau_1, \tau_2, \tau_3, \tau_4$ of $\tau$ together and define $S$ labelings of $\tau'$ as follows.

\begin{itemize}
\item For $s=1$, we copy the labeling of $\tau$ onto $\tau'$.
\[
\psi_1(a) = \psi(a)
\]
for each taxon $a$ of $\tau_1, \tau_2, \tau_3, \tau_4$.
\item For all $s\ge 2$, the labelings are defined as follows:
\[
\psi_s (a) = \sigma^j(\psi (a)) \h \text{if $a$ is a taxon of $\tau_j$}
\]
where $\sigma$ is the permutation $(A ~ G ~ T ~ C)$ in cycle notation.
\end{itemize}

The construction of $\tau'$ is similar to the construction of the inverse of elements in the group $G$ in the proof of Theorem~$\ref{group}$.
Because of symmetry, for $s \ge 2$, the frequency pattern corresponding to site $s$ at the root of the newly created tree will be the same for every state while for $s = 1$, the frequency pattern of $\tau'$ is  obtained by multiplying the frequency pattern of $\tau$ by a factor of 4.

We deduce that the pattern created by $(\tau', \{\psi_i\} )$ is:
\[ \left( \begin{array}{cccc}
4x & 0 & \ldots & 0 \\
0 & 0 & \ldots & 0 \\
0 & 0 & \ldots & 0 \\
0 & 0 & \ldots & 0 \end{array} \right) + \left( \begin{array}{cccc}
c_1 & c_2 & \ldots & c_S \\
c_1 & c_2 & \ldots & c_S \\
c_1 & c_2 & \ldots & c_S \\
c_1 & c_2 & \ldots & c_S \end{array} \right)  \]
for some real constants $c_1, c_2, \ldots, c_S$.
\item[Step 3. ] By similar arguments, for any $i=1,2,3,4$ and $s=1,2, \ldots, S$, we can construct a tree of 16 edges for any patterns with $S$ sites whose only non-zero entry is at the $(i,s)$-position. Hence, it takes $16 \times 4S = 64S$ edges to construct a tree with an arbitrary given frequency pattern.

\end{itemize}

\end{proof}
\section{Non-uniqueness of stationary points: Kimura 2-parameter model}

In this section, we provide an example for which there are multiple stationary points of the likelihood function.
To construct such an example, we find two polynomials $p_1(x)$ and $p_2(x)$ with coefficients $b_1$, $b_2$  such that the product $p_1p_2$ has 2 local maxima in $[0,1]$, and
$p_1$ and $p_2$ can be expressed as positive linear combination of the basis polynomial functions $P_i$ derived from an evolutionary model (as will be carefully described in this section).
This gives a counter-example with $S=2$ sites.

Consider the Kimura 2-parameter model with $\kappa = 3$ which has the rate matrix
\begin{equation}
Q=\left( \begin{array}{cccc}
-5/8 & 3/8 & 1/8 & 1/8 \\
3/8 & -5/8 & 1/8 & 1/8 \\
1/8 & 1/8 & -5/8 & 3/8 \\
1/8 & 1/8 & 3/8 & -5/8 \end{array} \right).
\label{qkimura}
\end{equation}
This matrix has eigenvalues $0>-\gamma > -2\gamma$ where $\gamma=0.5$.
The transition probabilities under this evolutionary model can be computed explicitly by
\begin{align*}
\h P_1(t) &= 0.25+0.25\exp(-0.5t) +  0.5\exp(-t) \\
P_2(t) &= 0.25 + 0.25\exp(-0.5t) - 0.5\exp(-t) \\
P_3(t) &= P_4(t) = 0.25 -0.25\exp(-0.5t)
\end{align*}
where $P_1(t), P_2(t), P_3(t), P_4(t)$ are the probabilities of transitioning from state $A$ to state $A, T, G, C$, respectively.
This simple model is ``universal'' in an appropriate sense as shown in the end of the paper.

This leads to a representation of the likelihood as the product of two different linear combinations of the transition polynomials
\begin{align}
\h P_1(x) &= 0.25+0.25x +  0.5x^2 \nonumber \\
P_2(x) &= 0.25 + 0.25x - 0.5x^2 \label{equx} \\
P_3(x) &= P_4(x) = 0.25 -0.25x \nonumber
\end{align}
where $x=\exp(-0.5t)$.

We assume that the likelihood is computed by observing two sites $s_1$ and $s_2$, and that the edge of interest $e$ is a pendant edge with the observed values at that tip being $A$ for both sites.
Assume further that the state observation probabilities at the inner node of the edge $e$ are provided by
\[
b^1 = [0.24977275,  0.34067358,  0.2051904,  0.20436327 ]
\]
and
\[
 b^2 = [0.25,  0.16087344,  0.29328435,  0.29584221].
\]
As discussed earlier, the log-likelihood function can be computed as
\begin{equation}
\ell(t)=\log(\lambda_1(t))+ \log( \lambda_2(t))
\label{lambda}
\end{equation}
where
\[
\lambda_s(t) = \sum_{i =1}^4{b^s(i) P_i(t)}.
\]

Plots of the log-likelihood function $\ell$ and its perturbations (by varying the coefficients slightly) in terms of $x$ and $t$ are provided in Figure~$\ref{example}$ and Figure~$\ref{length}$, respectively.
The figures show that $\ell$ has three stationary points (two local maxima at $t_1 < t_2$ and one local minimum), all in the interval $[0,1]$.
The fact that $\ell(t_1)> \ell(t_2)$ for some cases and $\ell(t_1)<\ell(t_2)$ for some others indicates that there exist some values of $b_i^s$ such that $\ell(t_1)=\ell(t_2)$, i.e.
 the smoothly varying likelihood function can even have two global maxima.

\begin{figure}
\centering
  \includegraphics[width=0.7\linewidth]{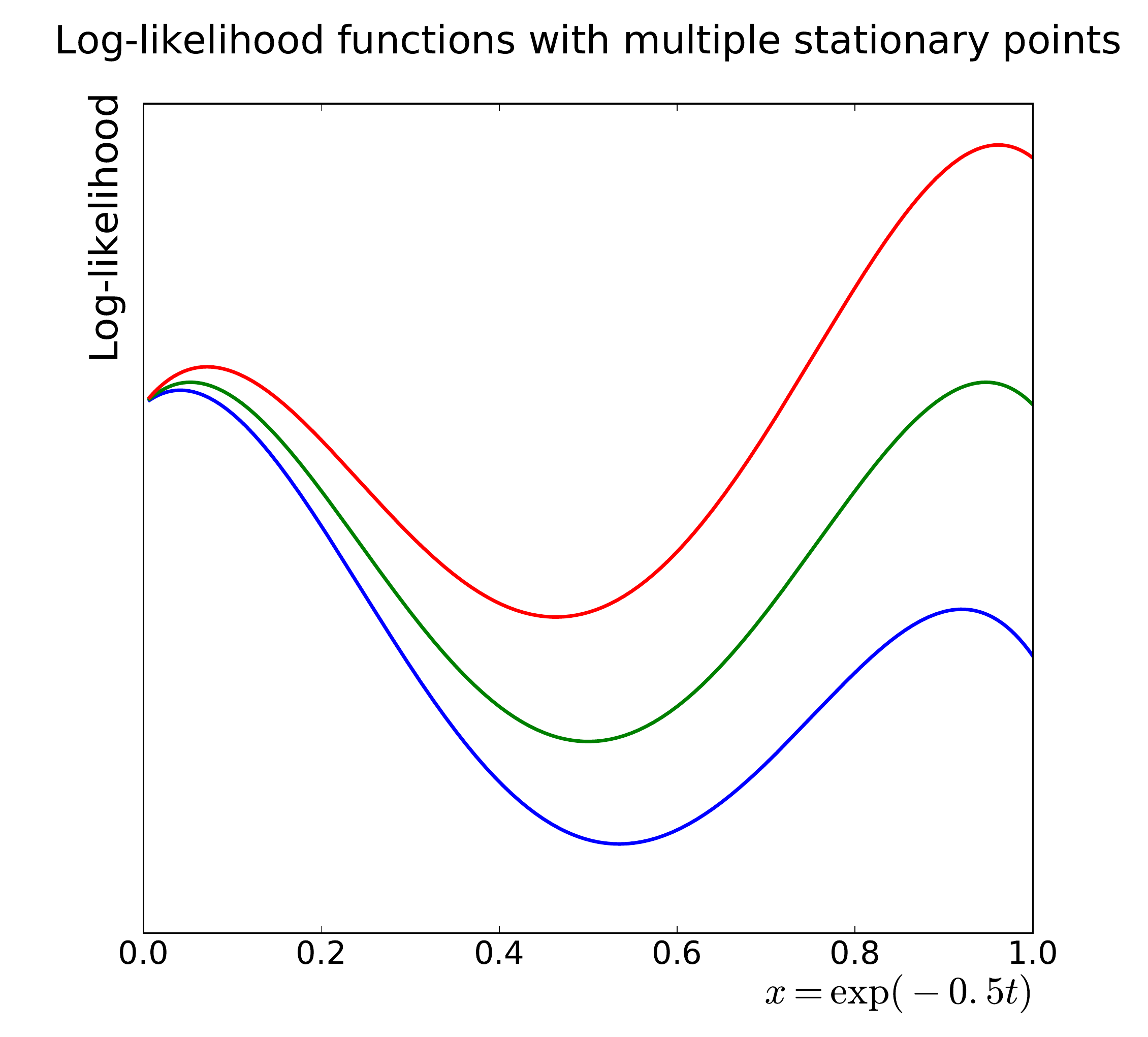}
  \caption{The log-likelihood \eqref{lambda} as a function of $x=\exp{(-0.5t)}$ for various values of the coefficients of the characteristic polynomial. }
  \label{example}
\end{figure}

We note that these examples can be achieved under the assumption that given any positive coefficients $b_i^s$ of the inner node, we can find some trees under the Kimura 2-parameter model with these precise coefficients.
This assumption is confirmed by the following result, proven in the Appendix.

\begin{Theorem}
For every set of positive coefficients $\eta_i^s$, there exist a phylogenetic tree $\tau$ and $S$ labelings $\psi_1, \psi_{2}, \ldots, \psi_S $ of the taxa such that for some edge $e$ in $\tau$, the one\hyp dimensional likelihood function on $e$ under the Kimura 2-parameter model with $\kappa=3$ satisfies
\[
\ell(\tau,t) = C_0 + \sum_{s=1}^S{\log{\left( \sum_{i}{\eta^s_{i} P_{i}(t)}
\right)}}
\]
where $P_i(t)$ is the probability of transition from state $A$ to state $i$ and $C_0$ is a constant. Moreover, such a tree $\tau$ can be constructed with at most $64S+1$ edges.

\label{density}
\end{Theorem}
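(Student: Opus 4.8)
The plan is to reduce the statement to the realizability of frequency patterns established in Theorem~\ref{main} and Theorem~\ref{main2}, and then attach a single pendant edge on which the desired one-dimensional likelihood can be read off directly. Concretely, given positive coefficients $\eta_i^s$, I would set the target pattern $v_{i,s} = \log \eta_i^s$; this is a well-defined real matrix precisely because each $\eta_i^s > 0$. By Theorem~\ref{main2}, there is a rooted tree $\tau_1$ under the Kimura 2-parameter model with $\kappa = 3$, labelings $\psi = (\psi_1, \ldots, \psi_S)$, and a constant vector $c$, realized with at most $64S$ edges, such that $\phi(\tau_1, \psi, c) = [v_1 ~ \cdots ~ v_S]$. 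I would then form $\tau$ by adjoining to the root $\rho$ of $\tau_1$ a new pendant edge $e$ whose free endpoint is a leaf labeled $A$ at every site. This adds exactly one edge, giving the claimed bound of $64S + 1$, and we study the one-dimensional likelihood as $e$ varies.

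The core of the argument is a direct computation of the likelihood along $e$. Using reversibility I would root the whole tree at the new leaf, so that the pruning recursion gives, for each site $s$,
\[
P(\psi_s \mid t) = \pi(A) \sum_{i \in \Omega} P_{Ai}(t)\, Q_{i,s},
\]
where $Q_{i,s}$ is the conditional likelihood of the subtree $\tau_1$ given that its root $\rho$ is in state $i$. Since the stationary distribution is uniform ($\pi(i) = 1/4$), the raw quantity $D_{i,s} := \sum_{a \in \mathcal{Z}_{i,s}} \pi(i)\prod_{(u,v)} P^{uv}_{a_u a_v}$ appearing in the definition of $\phi$ satisfies $D_{i,s} = \pi(i)\,Q_{i,s} = \tfrac14 Q_{i,s}$, and by construction $D_{i,s} = \exp(\phi_{i,s} - c_s) = \eta_i^s e^{-c_s}$. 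Substituting and using $P_{Ai}(t) = P_i(t)$ (the transition polynomial from state $A$ to state $i$ given in~\eqref{equx}), the factor $\pi(A)=1/4$ cancels the $1/\pi(i)$, and I obtain
\[
P(\psi_s \mid t) = \sum_{i} P_i(t)\, D_{i,s} = e^{-c_s}\sum_i \eta_i^s P_i(t).
\]
Taking logarithms and summing over $s$ then yields $\ell(\tau, t) = C_0 + \sum_{s=1}^S \log\!\big(\sum_i \eta_i^s P_i(t)\big)$ with $C_0 = -\sum_s c_s$, as required.

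The main obstacle is not conceptual but bookkeeping: I must verify that on a pendant edge the ``right-hand'' subtree degenerates to a single labeled leaf, so that only the observed state $A$ contributes to the sum over that endpoint of $e$, and that the stationary-distribution weights $\pi$ and the arbitrary additive constants $c_s$ are tracked consistently between the definition of the conditional frequency pattern $\phi$ (which carries both the $\pi(i)$ factor and the shift $c_s$) and the characteristic polynomial coefficients of the pendant-edge likelihood. The identity $\log b_{ij}^s = \phi_{i,s}(\tau_1) + \phi_{j,s}(\tau_2) + \tilde c_s$ from Section~\ref{sec:algebra} is precisely the general form of this bookkeeping; here $\tau_2$ collapses to a single leaf fixed at $A$, which is exactly what reduces the double sum over endpoint states to the single sum over $i$. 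Once these constants are pinned down, the hypothesis $\kappa = 3$ enters only through the availability of Theorem~\ref{main2} (equivalently Theorem~\ref{Kimura}), which guarantees that the real matrix of targets $\log \eta_i^s$ is realizable with the stated edge count, and through the explicit transition polynomials~\eqref{equx} that appear in the final expression.
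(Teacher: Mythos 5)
Your proposal is correct and follows essentially the same route as the paper's own proof: realize the target pattern $\log\eta_i^s$ via Theorem~\ref{main2} with at most $64S$ edges, attach one pendant edge whose leaf is labeled $A$ at every site, and read off the likelihood as $\sum_i \eta_i^s e^{-c_s} P_{iA}(t)$ with $P_{iA}=P_{Ai}=P_i$ by the model's symmetry. Your extra bookkeeping about $\pi(A)$ cancelling $1/\pi(i)$ is consistent with the paper's more direct computation rooted at $\rho$, so nothing is missing.
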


In our examples, the upper bound on the number of edges to produce the given frequency pattern is $64 \times 2+1 = 129$ edges.

\begin{Remark}
While the algorithm to construct a tree given a frequency pattern given by Theorem~$\ref{main2}$ always outputs a star-tree (a tree without internal edges), we note that
\begin{enumerate}
\item[1. ] We can approximate any star tree by resolved trees with arbitrary precision.
\item[2. ] The maximum number of stationary points of a polynomial of degree four is 3, hence small perturbations on the coefficient of a polynomial of degree four with three stationary points do not change the number of stationary points.
\end{enumerate}

We deduce that there are resolved trees for which the one\hyp dimensional likelihood function on certain edges have multiple maxima.
\end{Remark}

Since a resolved tree with $n$ taxa has $2n-3$ edges, the upper bound on the number of edges of a resolved tree for which the one\hyp dimensional likelihood function on certain edges has multiple maxima is $2 \times 129 - 3 = 255$ edges.

\begin{figure}
\centering
  \includegraphics[width=0.7\linewidth]{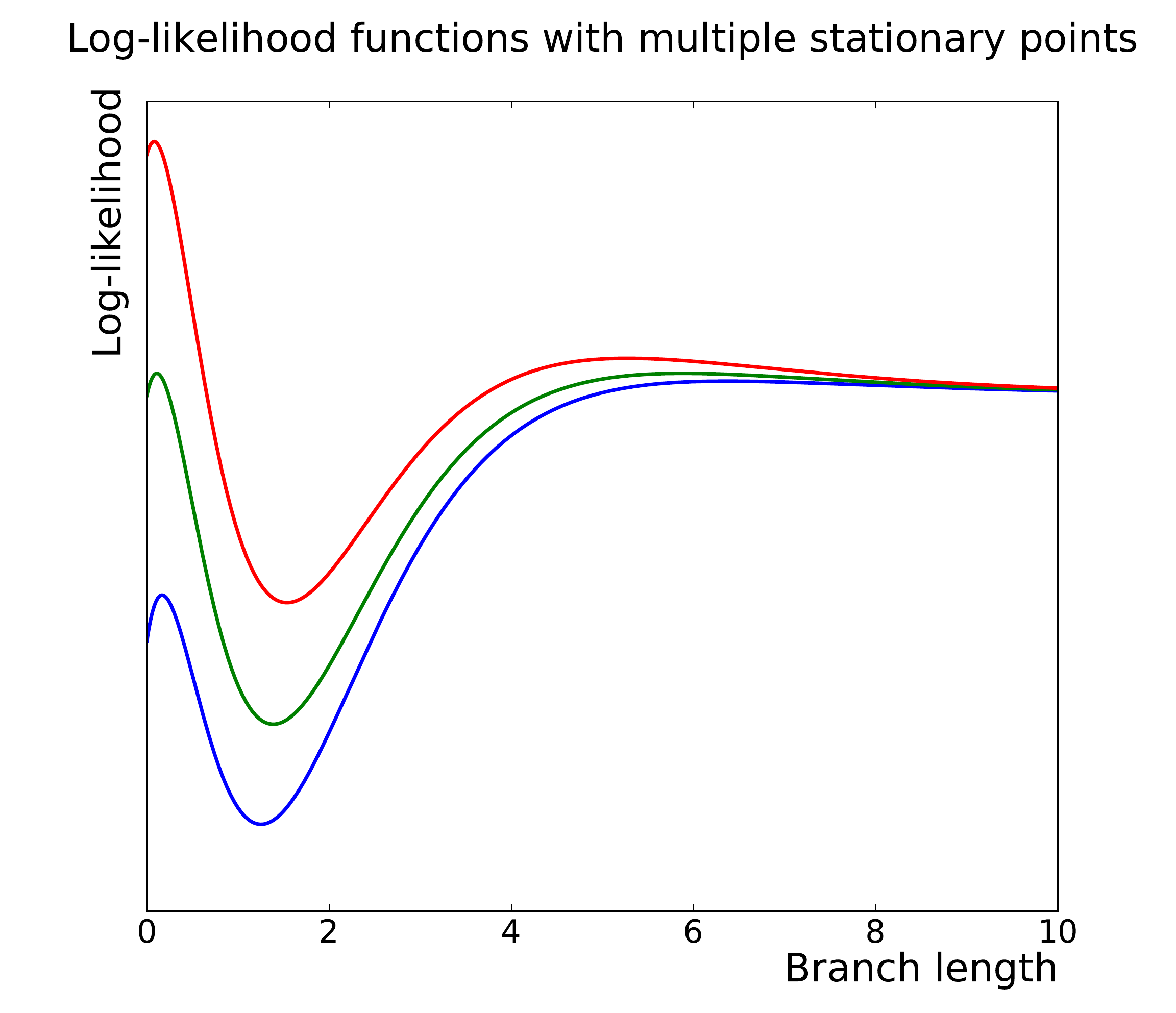}
  \caption{The log-likelihood \eqref{lambda} as a function of branch length $t$ for various values of the coefficients of the characteristic polynomial.  }
  \label{length}
\end{figure}

\section{Universality and complexity of the Kimura 2-parameter model}
\label{sec:6}
As we discussed earlier in the paper, the main idea behind the results in Section 3 and Section 4 is that by using the Fundamental Theorem of Algebra, we can decompose a complicated evolutionary model into smaller modules, each of which can be approximated either by a ``linear'' model or by a ``quadratic'' model.
This paradigm focuses on the branch lengths of the tree and is independent of the state space $\Omega$ of the evolutionary model, which provides a way to represent advanced evolutionary models (amino-acid models, codon models) by simple ones (nucleotide models).

This motivates the problem of constructing a complete characterization of one\hyp dimensional likelihood functions.
The main question is: does there exist an evolutionary model that can represent all one\hyp dimensional likelihood functions of any time-reversible evolutionary model?

Such a model $\mathcal{M}$, if it exists, and which we will refer to as a universal model, needs to satisfy the following two conditions:

\begin{enumerate}
\item[1. ] All one\hyp dimensional likelihood functions under any reversible evolutionary model can be written as a product of polynomials, each of which is a positive linear combination of the transition polynomials of $\mathcal{M}$.
\item[2. ] For every set of positive coefficients $b_{ij}^s$, there exists a phylogenetic tree $\tau$ and $S$ labelings $\psi_1, \psi_{2}, \ldots, \psi_S $ of the taxa such that for some edge $e$ in $\tau$, the one\hyp dimensional likelihood function on $e$ under the $\mathcal{M}$ satisfies
\[
\ell(\tau,t) = C_1 + \sum_{s=1}^S{\log{\left( \sum_{ij}{b^s_{ij} P_{ij}(t)}
\right)}}
\]
for some constant $C_1$.
\end{enumerate}

In this section, we will prove that the Kimura 2-parameter model with $\kappa = 3$ is, in fact, a universal model.
The key components of the proof are Theorem~$\ref{density}$, the Fundamental Theorem of Algebra and the fact that the transition polynomials of the Kimura 2-parameter model effectively span a large class of linear and quadratic polynomials.

\subsection{Universality of the Kimura 2-parameter model}

We first make the following observation, proven in the Appendix.

\begin{Lemma}
If $f$ is a real-coefficient polynomial that satisfies
\begin{enumerate}
\item[1. ] $f$ is positive on $[0,1]$,
\item[2. ] $\text{deg} ~f =1$ or $f$ is a quadratic polynomial with no real root,
\end{enumerate}
then $f$ can be written as positive linear combination of the transition polynomials of the Kimura 2-parameter model if and only if
\begin{enumerate}
\item[1. ] $\text{deg} ~f =1$ and $f(-1) > 0$,

or

\item[2. ] $\text{deg} ~f =2$ and $f$ has no root inside the set
\begin{equation}
B=\{z \in \mathbb{C}: |z+1|\le1 ~~ \text{or} ~~ |z-1|\le\sqrt{2}\}.
\label{star}
\end{equation}
\end{enumerate}
\label{lem1}
\end{Lemma}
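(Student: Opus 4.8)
The plan is to turn the membership question into an explicit linear system and solve it, then read off the positivity constraints geometrically. Since the Kimura transition polynomials satisfy $P_3=P_4$, a strictly positive combination $a_1P_1+a_2P_2+a_3P_3+a_4P_4$ produces the same set of target polynomials as $a_1P_1+a_2P_2+cP_3$ with $a_1,a_2>0$ and $c:=a_3+a_4>0$ (any positive $c$ splits as $a_3=a_4=c/2>0$). So I first reduce the problem to deciding when there exist $a_1,a_2,c>0$ with $f=a_1P_1+a_2P_2+cP_3$.

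Next I would note that $P_1,P_2,P_3$ are linearly independent (their coefficient matrix in the basis $\{1,x,x^2\}$ is nonsingular), so $(a_1,a_2,c)\mapsto a_1P_1+a_2P_2+cP_3$ is a bijection onto polynomials of degree $\le 2$. Matching the constant, linear, and quadratic coefficients of $f(x)=Ax^2+Bx+C$ gives the unique solution
\[
a_1=f(1),\qquad a_2=f(1)-2A,\qquad c=2\bigl(f(-1)-A\bigr),
\]
where $A$ is the leading coefficient ($A=0$ in the linear case). Because this representation is unique, $f$ is a positive combination of the $P_i$ \emph{if and only if} these three quantities are all positive, which delivers both directions of the equivalence at once.

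The remaining step is to translate $a_1,a_2,c>0$ into the stated conditions. Positivity of $f$ on $[0,1]$ forces $f(1)>0$, so $a_1>0$ holds automatically. In the linear case $A=0$ gives $a_1=a_2=f(1)>0$, and the only surviving constraint is $c=2f(-1)>0$, i.e.\ $f(-1)>0$, which is condition~1. In the quadratic case, ``no real root'' together with positivity on $[0,1]$ forces $A>0$ and $f>0$ on all of $\mathbb{R}$; writing $f(x)=A(x-z_0)(x-\bar z_0)$ for the complex root $z_0$, the key identities are
\[
f(1)=A\,|1-z_0|^2,\qquad f(-1)=A\,|1+z_0|^2.
\]
Since $A>0$, these convert $a_2>0$ into $|z_0-1|>\sqrt{2}$ and $c>0$ into $|z_0+1|>1$. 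As $B$ is symmetric about the real axis and $\bar z_0$ is the other root, the conjunction of these two strict inequalities is exactly ``$f$ has no root in $B$'', i.e.\ condition~2.

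I expect the genuinely delicate points to be bookkeeping rather than conceptual: first, handling the degeneracy $P_3=P_4$ so that precisely three coefficients $(a_1,a_2,c)$ are pinned down and the characterization is unambiguous; and second, keeping the strict-versus-closed conventions aligned, so that the closed disks in the definition of $B$ (with $\le$) correspond to the strict inequalities $a_2,c>0$ coming from requiring strictly positive coefficients. The conceptual heart is the single observation that $f(\pm1)=A\,|1\mp z_0|^2$, which is what turns the two coefficient-positivity constraints into the two disks of radii $\sqrt{2}$ and $1$ centered at $1$ and $-1$.
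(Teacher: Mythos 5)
Your proof is correct and takes essentially the same route as the paper: express $f$ uniquely in the basis $\{P_1,P_2,P_3\}$ (after collapsing the degeneracy $P_3=P_4$) and translate strict positivity of the coefficients into conditions at $x=\pm 1$ via $f(\pm 1)=A\,|1\mp z_0|^2$. Your closed-form solution $a_1=f(1)$, $a_2=f(1)-2A$, $c=2\bigl(f(-1)-A\bigr)$ reproduces the paper's two case-by-case decompositions, and in fact the paper's printed coefficient of $P_1$ in the quadratic case, $(a-1)^2+b^2-1$, should be $(a-1)^2+b^2$ as your formula gives --- a typo that does not affect the conclusion, since the coefficient of $P_2$ is the binding constraint.
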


This enables us to establish the universality of the Kimura 2-parameter model.

\begin{Theorem}[Universality]
If $L$ is a one\hyp dimensional phylogenetic likelihood function of a tree under an arbitrary time-reversible model that satisfies Assumption~\ref{as1}, then up to translation and rescaling, $L$ is equal to a one\hyp dimensional likelihood under the Kimura 2-parameter model.

That is, there exist $c_1, c_2, c_3 >0$ such that
\[
L(t) = c_2 L_{\operatorname{K2P}}(\tau, \psi, c_3 t) - c_1, \h \forall t \in [0,\infty),
\]
where $L_{\operatorname{K2P}}(\tau, \psi, \cdot)$ is the one\hyp dimensional likelihood function under the Kimura 2-parameter model on some edge of a tree $\tau$ with labeling $\psi$.
\label{uni}
\end{Theorem}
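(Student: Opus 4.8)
The plan is to start from the characteristic polynomial representation of the given likelihood $L$ and reduce everything to a statement about which polynomials arise from the Kimura 2-parameter transition polynomials. Under Assumption~\ref{as1}, $L(t) = \prod_{s=1}^S \lambda_s(x)$ with $x = e^{-\gamma t}$, where each $\lambda_s$ is a polynomial in $x$ that is strictly positive on the relevant range of $x$, namely $x \in (0,1]$. By the Fundamental Theorem of Algebra I would factor each $\lambda_s$ into real linear and real quadratic (irreducible) factors, as already noted in Section~2. Each such factor $f$ is then a polynomial that is positive on $[0,1]$ and is either degree one or an irreducible quadratic, so Lemma~\ref{lem1} applies factor-by-factor. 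The first key step is therefore to show that, after an affine reparametrization of $x$ (equivalently, a rescaling of $t$ by some $c_3 > 0$ and hence a linear substitution in the variable), every one of these factors can be brought into the region where Lemma~\ref{lem1} guarantees it is a positive linear combination of the Kimura transition polynomials.

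The second key step addresses the obstruction that Lemma~\ref{lem1} is \emph{conditional}: a linear factor must satisfy $f(-1) > 0$ and a quadratic factor must have no root in the forbidden set $B$ of \eqref{star}. A generic factor coming from $L$ need not satisfy these. The remedy is the reparametrization $x \mapsto c_3 t$ (in the branch-length variable), which corresponds to substituting $x \to x^{c}$ or, after the polynomial substitution, rescaling the variable $x$ toward the origin. I would argue that by choosing the rescaling constant $c_3$ small enough, every real root and every pair of complex roots of the finitely many factors is pushed outward (in modulus) until it escapes the bounded region $B$ and the linear factors acquire $f(-1) > 0$; since $B$ is a bounded subset of $\mathbb{C}$ and there are only finitely many roots across all $S$ sites, a single uniform $c_3$ suffices. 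This is the step I expect to be the main obstacle: one must verify that the admissible region of Lemma~\ref{lem1} is genuinely reachable by this one-parameter rescaling for \emph{every} positive polynomial factor simultaneously, and that the rescaling does not destroy positivity on $[0,1]$ nor introduce new roots in $B$.

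Once each factor is expressed as a positive linear combination of Kimura transition polynomials, I would invoke Theorem~\ref{density} (equivalently the universality machinery of Theorem~\ref{main}): for the resulting positive coefficients $\eta_i^s$ there exists a tree $\tau$ under the Kimura model with $\kappa = 3$ and labelings $\psi$ realizing exactly the product $\prod_s (\sum_i \eta_i^s P_i)$ as its one-dimensional likelihood, up to an additive constant in the log-likelihood. Exponentiating the additive log-likelihood constant produces a multiplicative factor $c_2 > 0$, and the reparametrization produces the factor $c_3 > 0$ inside the argument; the remaining additive constant $c_1$ absorbs the difference between the realized likelihood and $L$ itself. Assembling these, I obtain the claimed identity $L(t) = c_2\, L_{\operatorname{K2P}}(\tau, \psi, c_3 t) - c_1$ for all $t \in [0,\infty)$. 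The final bookkeeping step is to track that the three constants are genuinely positive and that the translation/rescaling are applied in the correct order, which is routine once the factor-by-factor realization is in place.
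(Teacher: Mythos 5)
Your overall architecture (characteristic polynomial representation, factorization into linear and irreducible quadratic pieces via the Fundamental Theorem of Algebra, Lemma~\ref{lem1} applied factor-by-factor, then Theorem~\ref{density} to realize the resulting positive coefficients) matches the paper's, but the step you yourself flag as ``the main obstacle'' is where the argument breaks, and the remedy you propose cannot work. The rescaling $t \mapsto c_3 t$ is not a free parameter available for pushing roots out of $B$: since $L$ is a polynomial in $x=e^{-\gamma t}$ and the Kimura likelihood is a polynomial in $y=e^{-0.5\, c_3 t}$, matching the two forces $x=y^{2\gamma/c_3}$, so the only choices of $c_3$ preserving polynomiality are power substitutions $x\mapsto x^{k}$ with $k$ a positive integer (the paper simply takes $k=1$, i.e.\ $c_3=2\gamma$). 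Such substitutions do not push roots outward past the boundary of $B$; they replace each root $\rho$ by its $k$ complex $k$-th roots, which have modulus $|\rho|^{1/k}\to 1$ and arguments spread over the whole circle, so for large $k$ some of them land on the right half of the unit circle, which lies inside $B$ (for $|z|=1$ one has $|z-1|\le\sqrt2$ exactly when $\mathrm{Re}\, z\ge 0$). A genuine linear rescaling $x\mapsto cx$ with $c$ small, which \emph{would} move roots outward as you describe, corresponds to a translation of the branch length $t$, not a rescaling, and is not permitted by the statement.

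The missing idea is that the root-clearing is accomplished by the additive constant $c_1$, not by $c_3$. The paper sets $c_1 = 1+\sup_{z\in B}|\mathcal L(z)|$, finite because $B$ is compact, so that the shifted polynomial $\mathcal L + c_1$ has no zero anywhere in $B$; it then factors $\mathcal L + c_1$ from scratch into linear and irreducible quadratic factors, each of which is positive on $[0,1]$ and automatically satisfies the hypotheses of Lemma~\ref{lem1} (in particular $g_s(-1)>0$ in the linear case, since the real interval around $[0,1]$ covered by $B$ forces the sign to be constant there). Note this is incompatible with your factor-by-factor setup starting from the per-site polynomials $\lambda_s$: adding a constant to the product $\prod_s\lambda_s$ does not distribute over the factors, so one must abandon the original site structure and refactor the shifted product, obtaining a new (generally larger) collection of factors. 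In your writeup $c_1$ appears only as end-of-proof bookkeeping that ``absorbs the difference,'' so the essential mechanism is absent. Once this is corrected, the remaining assembly --- $c_2$ from the constant in Theorem~\ref{density} and $c_3=2\gamma$ from matching exponential rates --- goes through exactly as you describe.
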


\begin{proof}
Assumption~\ref{as1} implies that the function
\[
\mathcal{L}(x) := L \left(-\frac{1}{\gamma} \log x \right)
\]
is a polynomial in $x$ for some $\gamma>0$. Since $\mathcal{L}$ is continuous and the set $B$ defined by $\eqref{star}$ is compact, if we define
\[
c_1 = 1 + \sup_{z \in B}{|\mathcal{L}(z)|},
\]
then by the triangle inequality, the polynomial $\mathcal{L}(x)+ c_1$ has no root in $B$.

By the Fundamental Theorem of Algebra, the polynomial $\mathcal{L}(x)+ c_1$ can be written as
\[
\mathcal{L}(x)+ c_1 =  \prod_{s=1}^{S}{g_s(x)},
\]
where each $g_s$ is either a quadratic polynomial with no real root, or a polynomial of degree 1.
Moreover, each $g_s$ is positive on $[0,1]$ and has no root in $B$ (which also implies $g_s(-1)>0$ if $\text{deg}~ g_s=1$).
Lemma~$\ref{lem1}$ implies that each $g_s$ can be written as a positive linear combination of the transition polynomials of the Kimura 2-parameter model
\[
g_s(x) = \sum_{ij} {b^s_{ij} P_{ij}(x)}.
\]

We deduce that
\[
\log(\mathcal{L}(x)+c_1) = \sum_{s=1}^S {\log \left( \sum_{ij} {b^s_{ij} P_{ij}(x)}\right)}.
\]

We recall that the Kimura 2-parameter model has symmetries such that any transition probability $P_{ij}(t)$ is in fact equal to $P_{Al}(t)=P_l(t)$ for some $l$.
Therefore, by grouping
\[
\eta^s_l := \sum_{i,j : P_{ij}=P_{Al}}{b^s_{ij}},
\]
we have
\[
\log(\mathcal{L}(x)+c_1)=\sum_{s=1}^S {\log \left( \sum_{l} {\eta^s_{l} P_{l}(x)}\right)}.
\]
Also, the characteristic polynomial for the Kimura 2-parameter model \eqref{qkimura} with $\kappa=3$ is parameterized by $x=\exp(-0.5 t)$ such that the one\hyp dimensional likelihood $L_{\operatorname{K2P}}(\tau, \psi, t)$ satisfies
\[
L_{\operatorname{K2P}}(\tau, \psi, t) = \mathcal{L}_{\operatorname{K2P}}(\tau, \psi, \exp(-0.5 t)).
\]

Now, Theorem~$\ref{density}$ guarantees that there exists a tuple $(\tau, \psi)$ under the Kimura 2-parameter model on an edge of the tree such that
\[
\log \mathcal{L}_{\operatorname{K2P}}(\tau, \psi, x) = - \log c_2 + \sum_{s=1}^S {\log \left( \sum_{l} {\eta^s_{l} P_{l}(x)}\right)}
\]
for some positive constant $c_2$.

In other words, we have
\[
\mathcal{L}(x) = c_2 \mathcal{L}_{\operatorname{K2P}}(\tau, \psi, x) - c_1,  \h \forall x \in (0,1].
\]

Hence,
\[
L \left(-\frac{1}{\gamma} \log x \right) = c_2 L_{\operatorname{K2P}}(\tau, \psi, -2 \log x) - c_1, \h \forall x \in (0,1],
\]
or
\[
L(t) = c_2 L_{\operatorname{K2P}}(\tau, \psi, c_3 t) - c_1, \h c_3 = \gamma/2, \h \forall t \in [0,\infty).
\]
That is, up to translation and rescaling, $L$ is equal to a one\hyp dimensional phylogenetic likelihood function under the Kimura 2-parameter model.
\end{proof}

Since the set of rate matrices for a given evolutionary model that satisfy Assumption~$\ref{as1}$ is dense in the set of all possible rate matrices under the same evolutionary model (Remark~$\ref{rem1}$), we also have the following corollary.
\begin{Corollary}
Any one\hyp dimensional phylogenetic likelihood function under an arbitrary time-reversible evolutionary model can be uniformly approximated with arbitrary precision by (rescaled and translated) one\hyp dimensional phylogenetic likelihood functions under the Kimura 2-parameter model.
\label{col1}
\end{Corollary}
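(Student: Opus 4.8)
The plan is to combine Theorem~\ref{uni} with Remark~\ref{rem1} through a continuity argument that is uniform in the branch length. Fix the tree, the data $\psi$, every branch length other than that of the distinguished edge $e_0$, and the rate matrix on every edge; the given likelihood is then the univariate function $L(t)=\prod_{s=1}^S\lambda_s(t)$ with $\lambda_s(t)=\sum_{ij}b_{ij}^s\,P_{ij}(t)$ and $P_{ij}(t)=\bigl(e^{Q_{e_0}t}\bigr)_{ij}$. The coefficients $b_{ij}^s$ encode the rest of the tree and stay fixed; only the rate matrix $Q_{e_0}$ on the variable edge enters, through the transition probabilities. By Remark~\ref{rem1}, for every $\delta>0$ there is a rate matrix $\widetilde Q_{e_0}$ within $\delta$ of $Q_{e_0}$ (same model and same stationary distribution $\pi$) that satisfies Assumption~\ref{as1}. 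Replacing $Q_{e_0}$ by $\widetilde Q_{e_0}$ produces a perturbed likelihood $\widetilde L$ which, being a one\hyp dimensional likelihood under a time-reversible model satisfying Assumption~\ref{as1}, is by Theorem~\ref{uni} exactly a rescaled and translated Kimura 2-parameter likelihood. It therefore suffices to show that $\widetilde L\to L$ uniformly on $[0,\infty)$ as $\delta\to0$.

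Since $L=\prod_s\lambda_s$ is a finite product of factors that are bounded above and, being strictly positive and continuous on $[0,\infty)$ with a positive finite limit at $\infty$, bounded below by a positive constant, it is enough to prove that the entries $P_{ij}(t)=\bigl(e^{Q_{e_0}t}\bigr)_{ij}$ depend on the rate matrix continuously and \emph{uniformly over} $t\in[0,\infty)$. To exploit reversibility I would conjugate by $\Pi^{1/2}:=\mathrm{diag}(\pi)^{1/2}$: the matrix $A:=\Pi^{1/2}Q_{e_0}\Pi^{-1/2}$ is symmetric with nonpositive eigenvalues and a simple zero eigenvalue whose eigenvector $\sqrt{\pi}$ is shared by every rate matrix with the same $\pi$. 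Hence $e^{Q_{e_0}t}=\Pi^{-1/2}e^{At}\Pi^{1/2}$ converges, at the exponential rate set by the spectral gap $|\mu_1|>0$, to the rank-one limit $\Pi^{-1/2}\sqrt{\pi}\sqrt{\pi}^{\,\mathsf T}\Pi^{1/2}$ that is common to $Q_{e_0}$ and $\widetilde Q_{e_0}$.

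The hard part is controlling the approximation on the non-compact domain $[0,\infty)$, so that it does not degrade as $t\to\infty$. I would split $[0,\infty)$ at a cutoff $T$. For the tail $t\ge T$, the spectral gap is continuous and strictly positive at $Q_{e_0}$ (which is ergodic), hence bounded below by some $\delta_0>0$ on a neighbourhood of $Q_{e_0}$; both $e^{Q_{e_0}t}$ and $e^{\widetilde Q_{e_0}t}$ then lie within a constant multiple of $e^{-\delta_0 T}$ of their common limit, so their difference is at most $O(e^{-\delta_0 T})$ uniformly in the perturbation. For the compact range $t\in[0,T]$, the map $Q\mapsto e^{Qt}$ is jointly continuous and hence uniformly continuous on $[0,T]$, so the difference is small once $\delta$ is small. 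Choosing $T$ large and then $\delta$ small makes $\sup_{t\ge0}\bigl|P_{ij}(t)-\widetilde P_{ij}(t)\bigr|$ arbitrarily small.

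Finally, uniform convergence of every $P_{ij}$ gives uniform convergence of each $\lambda_s$, and the boundedness of the $\lambda_s$ promotes this to uniform convergence of the product, so $\widetilde L\to L$ uniformly on $[0,\infty)$. Given $\varepsilon>0$, picking $\delta$ small enough that $\|\widetilde L-L\|_\infty<\varepsilon$ exhibits a rescaled and translated Kimura 2-parameter likelihood within $\varepsilon$ of $L$, which proves the corollary. The only genuine subtlety, and the step I expect to require care, is the uniform-in-$t$ spectral-gap estimate furnishing the tail bound; everything else is routine continuity.
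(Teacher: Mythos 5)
Your proposal is correct and follows the same route the paper intends: perturb the rate matrix via Remark~\ref{rem1} to one satisfying Assumption~\ref{as1}, apply Theorem~\ref{uni} to get an exact Kimura 2-parameter representation, and pass to the limit. The paper states the corollary without spelling out the uniform-in-$t$ convergence; your spectral-gap tail bound plus compact-interval continuity argument correctly supplies that missing detail (and your observation that only $Q_{e_0}$ needs perturbing, since the $b_{ij}^s$ are fixed constants, is a valid simplification).
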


We also note that the rescaling and translation constants in the statements of Theorem~$\ref{uni}$ can not be removed: Lemma~$\ref{lem1}$ indicates that some polynomial function can not be represented exactly as a Kimura 2-parameter likelihood function.
For example, one of the transition polynomials of the Jukes-Cantor model is
\[
J(x) = 0.25 + 0.75 x
\]
which has $J(-1)<0$.
For this reason, some likelihood functions under the Jukes-Cantor model may not be represented exactly by the Kimura 2-parameter model without adjusting by an additive constant.

\subsection{Complexity of the Kimura 2-parameter model}

The universality results in the previous section can be adapted easily to analyze the set of all one\hyp dimensional phylogenetic likelihood functions under the Kimura 2-parameter model.
The following complexity results imply that one\hyp dimensional likelihood functions under advanced evolutionary models can be more complex than it is typically assumed by phylogenetic inference algorithms.

First, it is straightforward to check that Theorem~\ref{uni} still holds (without changing the proof) if we replace the one\hyp dimensional phylogenetic likelihood function $L$ with an arbitrary polynomial $P$ in $x = \exp(-\gamma t)$ for some $\gamma>0$ and relax Assumption~$\ref{as1}$.
Moreover, if $P$ is of degree $n$, then by Theorem~$\ref{density}$, it can be represented by a one\hyp dimensional likelihood function of a tree with at most $(64n+1)$ edges with respect to some $n$-site labeling of its taxa.

\begin{Corollary}
Given an arbitrary polynomial $P$ of degree $n$ and $\gamma>0$, then up to translation and rescaling, $P(\exp(-\gamma t))$ is equal to a one\hyp dimensional likelihood under the Kimura 2-parameter model on a phylogeny with at most $64n +1$ edges.
\label{col2}
\end{Corollary}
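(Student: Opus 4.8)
The plan is to run the proof of Theorem~\ref{uni} essentially verbatim, after noting that the only place that argument uses the phylogenetic origin of $L$ is its opening appeal to Assumption~\ref{as1}, which serves merely to guarantee that $\mathcal{L}(x) = L(-\frac{1}{\gamma}\log x)$ is a polynomial in $x$. Here we are handed a polynomial directly: under the substitution $x = \exp(-\gamma t)$, the function $P(\exp(-\gamma t))$ of $t$ corresponds to the polynomial $P(x)$, so I would set $\mathcal{L}(x) := P(x)$ and dispense with Assumption~\ref{as1} entirely.

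Next I would reproduce the construction from Theorem~\ref{uni} with $P$ in place of $\mathcal{L}$. Since the set $B$ in \eqref{star} is compact and $P$ is continuous, putting
\[
c_1 = 1 + \sup_{z \in B}{|P(z)|}
\]
ensures, by the triangle inequality, that $P(x) + c_1$ has no root in $B$. The Fundamental Theorem of Algebra then factors
\[
P(x) + c_1 = \prod_{s=1}^S {g_s(x)},
\]
where each $g_s$ is either linear or an irreducible quadratic and each is positive on $[0,1]$ with no root in $B$ (which forces $g_s(-1) > 0$ in the linear case). Lemma~\ref{lem1} expresses each $g_s$ as a positive linear combination of the Kimura transition polynomials, and Theorem~\ref{density} realizes the resulting sum of logarithms as a one\hyp dimensional Kimura 2-parameter likelihood on some edge of a tree, up to a positive multiplicative constant $c_2$. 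Unwinding the substitution exactly as in Theorem~\ref{uni} yields the representation up to translation by $c_1$, rescaling by $c_2$, and time-rescaling by $c_3 = \gamma/2$.

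The only genuinely new ingredient is the edge count. Since $\deg P = n$, adding the constant $c_1$ leaves the degree equal to $n$, and in any factorization into linear and irreducible quadratic pieces the factor degrees are each at least one and sum to $n$; hence the number of factors satisfies $S \le n$. Applying the edge bound from Theorem~\ref{density} then produces a tree with at most $64S + 1 \le 64n + 1$ edges, as claimed.

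I expect no real obstacle. The entire content is the observation that, once past the opening use of Assumption~\ref{as1}, the proof of Theorem~\ref{uni} manipulates only the polynomial $\mathcal{L}$ and its coefficients; the Fundamental Theorem of Algebra, Lemma~\ref{lem1}, and Theorem~\ref{density} all operate purely at the polynomial level and never reference the provenance of $\mathcal{L}$. Substituting an arbitrary $P$ therefore changes nothing except that tracking $\deg P$ gives the explicit $64n + 1$ bound.
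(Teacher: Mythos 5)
Your proposal is correct and follows essentially the same route as the paper, which likewise obtains the corollary by rerunning the proof of Theorem~\ref{uni} with an arbitrary polynomial in place of $\mathcal{L}$ and then invoking Theorem~\ref{density} for the edge count. Your explicit observation that the number of factors $S$ satisfies $S \le n$ (since the factor degrees are each at least one and sum to $n$) is exactly the bookkeeping that yields the $64n+1$ bound the paper asserts.
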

This corollary indicates that by increasing the number of sites and the size of the tree, we can obtain likelihood functions shaped like an arbitrary polynomial in the interval $[0,1]$. For example, given an arbitrary finite sequence $t_1, t_2, \ldots, t_k \in (0, \infty)$, we can construct a polynomial $P_k$ that peaks precisely at $x_k = \exp(-0.5t_k)$ and use Corollary~$\ref{col2}$ to obtain the following result.

\begin{Corollary}
Given an arbitrary finite sequence $t_1, t_2, \ldots,t_k \in (0,\infty)$, there exists a phylogenetic tree $\tau$ and some labeling of its taxa such that for some edge of the tree, the one\hyp dimensional likelihood function under the Kimura 2-parameter model peaks precisely at $t_1, t_2, \ldots, t_k $.

Furthermore, since rescaling and translation do not change the relative order of the likelihood values at the stationary points, we can make any of the $t_i$'s (or all of them) the function's global maxima.
\end{Corollary}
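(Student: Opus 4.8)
The plan is to reduce the statement to the construction of a single real polynomial and then invoke the universality machinery already proved. Work in the variable $x = \exp(-0.5\,t)$ of the Kimura 2-parameter model with $\kappa = 3$, and set $\xi_i := \exp(-0.5\,t_i)$ for $i = 1, \ldots, k$. After discarding repetitions we may assume the $t_i$ are distinct, so the $\xi_i$ are $k$ distinct points of $(0,1)$, which I relabel in increasing order $0 < \xi_1 < \cdots < \xi_k < 1$ for the construction. Since $t \mapsto x = \exp(-0.5\,t)$ is a strictly decreasing bijection of $(0,\infty)$ onto $(0,1)$, a branch length $t_i$ is a local maximum (resp.\ global maximum) of a one\hyp dimensional likelihood exactly when the corresponding $\xi_i$ is a local (resp.\ global) maximum of the associated function of $x$. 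Hence it suffices to produce a polynomial in $x$ whose local maxima on $(0,1)$ are precisely $\xi_1, \ldots, \xi_k$, with prescribed relative heights.

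The key step is to build such a polynomial explicitly. Let $W(x) = \prod_{i=1}^{k}(x - \xi_i)$ and let $g$ be a Hermite interpolant of degree at most $2k-1$ with $g(\xi_i) = h_i$ and $g'(\xi_i) = 0$ for arbitrary prescribed heights $h_i$. Define
\[
P(x) = g(x) - c\,W(x)^2, \qquad c > 0 .
\]
Because $W(\xi_i) = 0$, we get $P'(\xi_i) = g'(\xi_i) = 0$ and $P(\xi_i) = g(\xi_i) = h_i$, so the $\xi_i$ are critical points with the desired heights, and $P''(\xi_i) = g''(\xi_i) - 2c\,W'(\xi_i)^2 < 0$ for $c$ large (the $\xi_i$ are simple roots of $W$, so $W'(\xi_i) \neq 0$), making them local maxima. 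The main obstacle is to certify that these are the \emph{only} local maxima on $(0,1)$: writing $P' = g' - 2c\,WW'$, a polynomial of degree $2k-1$, I would argue that for $c$ large enough its $2k-1$ roots are all real and lie near the roots of $WW'$, namely the $\xi_i$ (which are exact roots) and the $k-1$ roots of $W'$, one in each gap $(\xi_i, \xi_{i+1})$ by Rolle. Accounting for all $2k-1$ critical points as $k$ maxima alternating with $k-1$ minima leaves none unaccounted for, so $P$ is monotone on $(0,\xi_1)$ and on $(\xi_k,1)$ and has no spurious extremum; this large-$c$ perturbation argument is the only genuinely analytic point in the construction.

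Finally I would realize $P$ as an honest Kimura 2-parameter likelihood. By Corollary~\ref{col2} (equivalently, by following the recipe in the proof of Theorem~\ref{uni}: choose $c_1$ larger than $\sup_{z \in B}|P(z)|$ so that $P + c_1$ has no root in the exclusion set $B$ of Lemma~\ref{lem1}, factor $P + c_1$ into linear and quadratic pieces by the Fundamental Theorem of Algebra, realize each piece as a positive combination of the transition polynomials via Lemma~\ref{lem1}, and apply Theorem~\ref{density}), there exist a tree $\tau$ and labeling $\psi$ with $L_{\operatorname{K2P}}(\tau,\psi,t) = c_2\bigl(P(\exp(-0.5\,t)) + c_1\bigr)$ for some $c_2 > 0$. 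Since the polynomial is expressed in the model's own variable $x = \exp(-0.5\,t)$, there is no distortion of the branch\hyp length axis, and the additive constant $c_1$ together with the positive factor $c_2$ neither move the critical points nor reverse the order of the values attained there. Thus $L_{\operatorname{K2P}}(\tau,\psi,\cdot)$ has local maxima exactly at $-2\log\xi_i$, which equal the prescribed $t_i$, and the value at $t_i$ is the increasing affine image $c_2(h_i + c_1)$ of $h_i$. Taking all $h_i$ equal makes every $t_i$ a global maximum; taking a single $h_{i_0}$ strictly largest makes $t_{i_0}$ the unique global maximum, which establishes the ``furthermore'' clause.
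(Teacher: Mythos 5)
Your proposal is correct and follows essentially the same route as the paper, which likewise constructs a polynomial peaking at $x_i=\exp(-0.5\,t_i)$ and then invokes Corollary~\ref{col2}; the paper simply asserts the existence of such a polynomial, whereas you supply an explicit construction (Hermite interpolant minus $c\,W^2$ with the large-$c$ root-perturbation argument) and verify the positivity and boundary-monotonicity details. No gaps.
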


Finally, we can replace the phylogenetic likelihood functions in Corollary~$\ref{col1}$ by an arbitrary continuous function $f$ with finite limit to obtain the following density result.
\begin{Corollary}
The space of all rescaled and translated one\hyp dimensional phylogenetic likelihood functions under the Kimura 2-parameter model is dense in the space of all non-negative continuous functions on $[0, \infty)$ with finite limits.
\label{col3}
\end{Corollary}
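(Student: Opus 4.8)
The plan is to reduce the statement to the Weierstrass approximation theorem via the substitution $x = e^{-\gamma t}$, and then invoke Corollary~\ref{col2} to realize the approximating polynomial as a rescaled and translated Kimura likelihood function. The crucial structural observation is that, for any fixed $\gamma > 0$, the substitution $x = e^{-\gamma t}$ is a homeomorphism from $[0,\infty)$ onto $(0,1]$, and that a continuous function $f$ on $[0,\infty)$ possesses a finite limit as $t \to \infty$ \emph{precisely} when the transformed function $g(x) := f\!\left(-\tfrac{1}{\gamma}\log x\right)$ extends continuously to the compact interval $[0,1]$, with $g(0) = \lim_{t\to\infty} f(t)$. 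Under this correspondence the map $f \mapsto g$ is an isometry for the supremum norm, since $(0,1]$ is dense in $[0,1]$ and the functions are continuous; that is, $\|f_1 - f_2\|_{\infty,[0,\infty)} = \|g_1 - g_2\|_{\infty,[0,1]}$.

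First I would fix $f$ in the target space and $\epsilon > 0$, form the associated $g \in C[0,1]$, and apply the Weierstrass approximation theorem to obtain a polynomial $P$ with $\|g - P\|_{\infty,[0,1]} < \epsilon$. Transporting this back through the isometry yields $\|f(t) - P(e^{-\gamma t})\|_{\infty,[0,\infty)} < \epsilon$, so that the single function $t \mapsto P(e^{-\gamma t})$ already approximates $f$ to within $\epsilon$ uniformly on $[0,\infty)$.

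It then remains only to check that $P(e^{-\gamma t})$ lies in the claimed family, which is exactly the content of Corollary~\ref{col2}: for an arbitrary polynomial $P$ and any $\gamma > 0$, the function $P(e^{-\gamma t})$ is, up to translation and rescaling, a one\hyp dimensional likelihood under the Kimura 2-parameter model (the accompanying edge bound being immaterial here). Hence $P(e^{-\gamma t})$ is itself a rescaled and translated Kimura likelihood function lying within $\epsilon$ of $f$, and since $\epsilon > 0$ was arbitrary, density follows.

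The step I expect to carry the real conceptual weight is the compactification setup, i.e.\ recognizing that the finite-limit hypothesis is exactly what converts the problem on the non-compact half-line into polynomial approximation on a compact interval, where Weierstrass applies; without it the transformed function need not extend to $x=0$ and uniform polynomial approximation on $(0,1]$ would fail. By contrast, the non-negativity of $f$ plays no role in the approximation itself — the approximating Kimura functions need not be non-negative, only close to the non-negative target — and the rescaling and translation constants furnished by Corollary~\ref{col2} are precisely the degrees of freedom quantified over in the statement, so no additional positivity argument (such as that in Lemma~\ref{lem1}) is needed once $P$ is in hand.
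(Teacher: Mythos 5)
Your proposal is correct and follows essentially the same route as the paper: compactify $[0,\infty)$ to $[0,1]$ via $x=e^{-\gamma t}$ using the finite-limit hypothesis, apply the Weierstrass approximation theorem to the transformed function, and realize the approximating polynomial $P(e^{-\gamma t})$ as a rescaled and translated Kimura 2-parameter likelihood via Corollary~\ref{col2}. The only cosmetic differences are that the paper fixes $\gamma=1$ and takes the Bernstein (positive) polynomial approximants, whereas you correctly observe that positivity of the approximants is not actually needed since Corollary~\ref{col2} applies to arbitrary polynomials.
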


\begin{proof}
Let $f$ be a continuous function on $[0, \infty)$ with finite limit. Define
\[
g(x)=f(-\log(x)) \h \forall x \in (0,1],
\]
then $g(x)$ can be extended continuously to $[0,1]$.
By Weierstrass's theorem \cite{farouki2012bernstein}, there exists a sequence of  positive polynomials $\{P_n\}$ such that
\[
\sup_{x \in [0,1]}{|P_n(x)-g(x)|} \to 0.
\]
This implies that
\[
 \sup_{t \in [0,\infty)}{|P_n(\exp(-t))-f(t)|} \to 0.
\]
On the other hand, we deduce from Corollary~$\ref{col1}$ that $P_n(\exp(-t))$ is, up to rescaling and translation, a one\hyp dimensional likelihood under the Kimura 2-parameter model.
This completes the proof.
\end{proof}

\section{Non-reversible Markov models of evolution}
\label{graphical}

As we mentioned in Section $\ref{background}$, the analyses in the previous sections can be described in the more general framework of probabilistic inference for graphical models.
In this framework, the likelihood function can be be defined as the marginal distribution on the leaf nodes of a joint probability distribution that factorizes over the edges of the tree via the non-negative kernels (also referred to as potential functions) $k_{e}(i, j, t)$.
The one\hyp dimensional phylogenetic likelihood functions can be obtained by fixing all but one branch length.

In this section, we briefly analyze the extent to which our analyses of one\hyp dimensional likelihood functions are valid in this more general setting.
As we illustrate below, the results in this section do not assume the reversibility of the kernels and thus apply for non-reversible models of evolution.
However, we need to modify our assumptions accordingly.

Several parts of our analysis rely on the core assumption that the kernel functions need to be polynomials of $x=\exp(-\gamma t)$ for some $\gamma>0$.
Thus we require the following assumption, which is the equivalent of Assumption~$\ref{as1}$ but in a more general setting.
\begin{Assumption}[Polynomial representation]
There exists a constant $\gamma_e>0$ and polynomials $p^{ij}_e(x)$ such that
\[
k_{e}(i, j, t) = p^{ij}_e(\exp(-\gamma_e t))\h \forall t,
\]
for all $i, j \in \Omega$ and $e \in E(T)$.
\label{as3}
\end{Assumption}
This assumption implies that the limit of $k_{e}(i, j, t)$ for large $t$ exists, that is, the kernels are stationary.
We also note that using the density results for Bernstein's polynomial approximation (see, for example, \cite{farouki2012bernstein}), any non-negative continuous function on $[0, \infty)$ with finite limit can be approximated with arbitrary precision by some kernels that satisfy Assumption~$\ref{as3}$.

Under this assumption, the characteristic polynomials $\lambda_s(x)$ can be defined in a similar manner and the result in Section $\ref{sec:unique}$ (Theorem $\ref{Theo1}$) is still valid.
\begin{Theorem}
Under Assumption $\ref{as3}$, if for every site index $s$, the polynomial $\lambda_s$ has only real roots, then one\hyp dimensional likelihood function has at most one stationary point.
Moreover, if such a point exists, it is also a global maximum.
\label{Theo1-general}
\end{Theorem}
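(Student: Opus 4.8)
The plan is to reproduce the argument of Theorem~\ref{Theo1} almost verbatim, since Assumption~\ref{as3} supplies exactly the polynomial representation that Assumption~\ref{as1} provided in the reversible case. First I would fix every branch length except that of the edge $e_0$ of interest and write out the resulting one\hyp dimensional function. Because each kernel on $e_0$ is, by Assumption~\ref{as3}, a polynomial in $x=\exp(-\gamma_{e_0}t)$ while every other factor is constant in $t$, grouping the contributions of $E_{\operatorname{left}}$ and $E_{\operatorname{right}}$ into coefficients $b_{ij}^s$ yields $\ell(t)=\sum_{s}\log\lambda_s(x)$ with each $\lambda_s$ a polynomial. These coefficients $b_{ij}^s$ are non-negative here because the kernels are non-negative, so the $b_{ij}^s$ are sums of products of non-negative quantities; this is the only place where the passage from the reversible setting to the general graphical-model setting needs checking, and it goes through without invoking reversibility.

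Next I would use the hypothesis that each $\lambda_s$ has only real roots to factor it as $\lambda_s(x)=\prod_i(\alpha_{s,i}+\beta_{s,i}x)$. Since $\lambda_s$ is strictly positive on $(0,1]$ (it is a likelihood), none of its roots lie in $(0,1]$, so each linear factor has constant sign there; after flipping the pairs $(\alpha_{s,i},\beta_{s,i})$ to $(-\alpha_{s,i},-\beta_{s,i})$ for the (necessarily even) collection of negative factors, I may assume every factor is positive on $(0,1]$, so that each term $\log(\alpha_{s,i}+\beta_{s,i}x)$ is well defined.

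With this in hand the incavity computation is identical to that of Theorem~\ref{Theo1}, with $\gamma$ replaced by $\gamma_{e_0}$: applying $\log y\le y-1$ termwise to $\ell(t)-\ell(u)=\sum_{s,i}\log\!\big((\alpha_{s,i}+\beta_{s,i}e^{-\gamma_{e_0}t})/(\alpha_{s,i}+\beta_{s,i}e^{-\gamma_{e_0}u})\big)$ and simplifying gives $\ell(t)-\ell(u)\le \tfrac{1}{\gamma_{e_0}}(1-e^{-\gamma_{e_0}(t-u)})\,\ell'(u)$, so $\ell$ is incave with witness $\phi(t,u)=\tfrac{1}{\gamma_{e_0}}(1-e^{-\gamma_{e_0}(t-u)})$. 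Finally, since $\ell'$ pulls back through the bijection $t\mapsto x=\exp(-\gamma_{e_0}t)$ of $[0,\infty)$ onto $(0,1]$ to a rational function of $x$, the function $\ell$ has only finitely many stationary points, and Lemma~\ref{finite} delivers the conclusion.

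I do not expect a serious obstacle: the analytic heart of the argument is already established in Theorem~\ref{Theo1}. The only genuinely new points to verify are (i) the non-negativity of the coefficients $b_{ij}^s$ in the non-reversible graphical-model setting, which follows purely from non-negativity of the kernels, and (ii) that although Assumption~\ref{as3} permits an edge-dependent rate constant $\gamma_e$, only the single constant $\gamma_{e_0}$ of the varying edge enters the one\hyp dimensional function, so both the reparametrization and the incavity estimate are unaffected.
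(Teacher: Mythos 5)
Your proposal is correct and follows essentially the same route as the paper, which itself gives no separate argument for this theorem but simply asserts that the proof of Theorem~\ref{Theo1} carries over under Assumption~\ref{as3}. In fact you are somewhat more careful than the paper: the sign-flipping of linear factors to keep each $\log(\alpha_{s,i}+\beta_{s,i}x)$ well defined on $(0,1]$ and the explicit check that the $b_{ij}^s$ remain non-negative without reversibility are both left implicit in the original.
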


While Section $\ref{sec:algebra}$ is specifically developed to analyze the Kimura 2-parameter model, the logarithmic relative frequency pattern can be extended easily by replacing the transition probabilities across the edge $e$ by the kernel $k_e$ and by setting the distribution of the root by the uniform distribution.
Building upon this concept, we can study the algebraic structure of the space of all frequency patterns and obtain a partial extension of Theorem $\ref{group}$ in a more general setting as described below.
However, the proofs of Theorems 4.2 and 4.3 are tailor-made for the Kimura 2-parameter model and are not easily extended to the general case.
We leave their extension as open problems.

We do obtain the following partial extension of Theorem~\ref{group} in the more general setting.
\begin{Theorem}
The following properties hold:
\begin{itemize}
\item[1. ] $(G,+)$ is a subgroup of $(\mathbb{R}^{r\times S}, +)$.

\item[2. ]  Assume
\[
\lim_{t \to \infty}{k_{e}(i, j, t)} =p^{ij}_e(0) = \frac{1}{r} \h \text{and} \h \lim_{t \to 0}{k_{e}(i, j, t)} = p_e^{ij}(1)=\delta_{ij}
\]
for all $i, j \in \Omega$ and $e \in E(T)$, where $\delta_{ij} =1$ if $i=j$ and $\delta_{ij} =0$ otherwise.
Then $G$ is a linear subspace of $\mathbb{R}^{r\times S}$.
\end{itemize}
\label{group-general}
\end{Theorem}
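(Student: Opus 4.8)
The plan is to mirror the proof of Theorem~\ref{group}, isolating exactly which steps relied on the Kimura symmetries and replacing them by arguments that use only the uniform root distribution (for Part~1) together with the two boundary conditions (for Part~2). Throughout I write the exponentiated pattern entry as the partition function $W_{i,s}=\sum_{a:a(\rho)=i}\prod_{e}k_e(a_u,a_v,t_e)$, so that $\phi_{i,s}=c_s+\log\frac1r+\log W_{i,s}$.

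For Part~1 I would first establish closure under addition by the gluing construction of Figure~\ref{addition}. Identifying the roots of two trees forces the shared root state to the common value $i$, so the sum over extensions factorizes as $W^{\mathrm{glued}}_{i,s}=W^{1}_{i,s}\,W^{2}_{i,s}$; taking logarithms and absorbing the $\log(1/r)$ bookkeeping term into the free constant $c_s$ shows $\phi(\tau_1)+\phi(\tau_2)\in G$. This uses only that $\pi$ is uniform, not reversibility. For inverses, the key observation is that relabeling every node of a tree by a state permutation $\sigma$ and correspondingly replacing each kernel $k_e(i,j,t)$ by $k_e(\sigma^{-1}i,\sigma^{-1}j,t)$ produces a tree whose pattern is the row-permutation $W'_{i,s}=W_{\sigma^{-1}(i),s}$ of the original. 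Provided the admissible kernel class is closed under this relabeling — which holds automatically for the full class of kernels allowed by Assumption~\ref{as3} — every permuted pattern $\phi^{(\sigma)}$ lies in $G$, and summing over all $\sigma$ in the symmetric group yields a single column-constant pattern $\Phi\in G$ (transitivity of $S_r$ collapses the average to a function of $s$ alone). Because $G$ is invariant under adding a constant within each column (this only changes $c$), the presence of one column-constant pattern forces the entire $S$-dimensional space of column-constant patterns into $G$; writing $-\phi=\sum_{\sigma\neq\mathrm{id}}\phi^{(\sigma)}-\Phi$ and using closure under addition then gives $-\phi\in G$, so $(G,+)$ is a subgroup.

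For Part~2 I would recover path-connectivity by the two-edge construction of Figure~\ref{connect}: given trees realizing $[X]$ and $[Y]$, attach both roots to a new root by edges of length $t$ and $1/t$. The boundary hypotheses are exactly what make the endpoints recover the targets. As $t\to0$ the first edge becomes the identity ($p^{ij}_e(1)=\delta_{ij}$) and passes subtree~$1$ through unchanged, while the second edge becomes uniform ($p^{ij}_e(0)=1/r$) and washes subtree~$2$ into a root-state-independent factor, so the pattern tends to $[X]$ up to a column constant; the roles reverse as $t\to\infty$. Since the kernels are polynomials in $x=e^{-\gamma t}$ and every $W_{i,s}$ stays strictly positive, the pattern varies continuously in $t$ and extends continuously to both endpoints. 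Prepending and appending the straight column-constant segments (each point $\phi(\tau,\psi,c+\tau K)$ lies in $G$ by varying $c$) connects the actual $[X]$ to the actual $[Y]$. Having shown that $G$ is a path-connected subgroup of $\mathbb{R}^{r\times S}$, Part~2 follows from the theorem of Hayashida \cite{hayashida1949arc} that a path-connected subgroup of $\mathbb{R}^{n}$ is a linear subspace.

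The main obstacle is the inverse step in Part~1. In the Kimura setting one averages over the \emph{model's own} symmetry group, whereas a general non-reversible model need not have any symmetry; the argument above instead averages over \emph{all} state relabelings, and therefore hinges on the admissible kernel class being closed under conjugation by state permutations. I would make this closure explicit (it is immediate for the maximal class permitted by Assumption~\ref{as3}) and verify carefully that it is precisely the transitivity of $S_r$ that collapses the averaged pattern to a column constant. A secondary, purely technical point is the treatment of the degenerate limits $t\to0$ and $t\to\infty$ in Part~2, where one must confirm that the limiting partition functions remain positive so that the logarithms — and hence the constructed path in $G$ — stay finite throughout.
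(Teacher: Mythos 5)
Your proof follows the same skeleton as the paper's proof of Theorem~\ref{group} (which is the template the paper says to adapt here): closure under addition by gluing roots, which indeed uses only the uniform root distribution; path-connectivity via the two-edge join with lengths $t$ and $1/t$, where the hypotheses $p^{ij}_e(0)=1/r$ and $p^{ij}_e(1)=\delta_{ij}$ play exactly the role that stationarity and ergodicity played in the Kimura case; and the conclusion via Hayashida's theorem that a path-connected subgroup of $\mathbb{R}^n$ is a linear subspace. The one place you genuinely diverge is the inverse step. The paper constructs $-\phi$ by gluing $r$ copies of the tree whose \emph{leaf labels} are permuted by powers of a single order-$r$ permutation $\sigma_0$ that is a symmetry of the Kimura rate matrix (note that for K2P the relevant $\sigma_0$ is not $(A\,G\,T\,C)$ but an order-$4$ cycle such as $(A\,T\,G\,C)$ that actually preserves the transition/transversion structure); this argument needs the model to admit such a symmetry. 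You instead conjugate the \emph{kernels} by all $\sigma\in S_r$ and average, using transitivity of $S_r$ to collapse $\sum_\sigma\log W_{\sigma^{-1}(i),s}$ to a column constant. Your device is more general -- an arbitrary non-reversible kernel class has no reason to possess an order-$r$ cyclic symmetry, so the paper's leaf-relabeling trick does not transfer, and something like your construction is in fact needed -- but it buys this generality at the price of the hypothesis you correctly flag: the admissible kernel class must be closed under conjugation by state permutations. This holds for the maximal class cut out by Assumption~7.1 (and the boundary conditions in Part~2 are themselves permutation-invariant), but it is an extra hypothesis relative to the paper's literal statement, and you should state it explicitly; for a fixed, asymmetric kernel family the permuted trees may leave the model. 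Your remaining caveats (positivity of the limiting partition functions so the logarithms stay finite, and the continuity of the path at the endpoints) are handled in the paper by the same observations you make, so the argument is complete once the closure hypothesis is recorded.
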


We recall that the frequency patterns can be defined without the polynomial representation of the likelihood.
Thus, in Theorem $\ref{group-general}$, Assumption $\ref{as3}$ is not required.
We further note that for part (1) of the theorem to be valid, we do not need to assume that the kernels are stationary.
In fact, the only condition required is that the distribution at the root is uniform.
To provide a proof for path-connectivity of $G$, however, the conditions about the behavior of the kernels at $0$ and $\infty$ are necessary.

Polynomial representation of likelihood functions (Assumption $\ref{as3}$) is needed to extend the results of Section $\ref{sec:6}$.
By the same arguments as in the proof of Theorem $\ref{uni}$, we obtain an equivalent result.
\begin{Theorem}
If $L$ is a one\hyp dimensional phylogenetic likelihood function of a tree under a model whose kernels satisfy Assumption~\ref{as3}, then up to translation and rescaling, $L$ is equal to a one\hyp dimensional likelihood under the Kimura 2-parameter model.
\end{Theorem}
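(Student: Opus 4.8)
The plan is to replay the proof of Theorem~\ref{uni} almost verbatim, with Assumption~\ref{as3} supplying in the non-reversible setting exactly what Assumption~\ref{as1} supplied in the reversible one, namely the polynomial representation of the likelihood. First I would note that when every branch length except that of the varying edge $e_0$ is held fixed, the factorization of the marginal leaf distribution over the edges of the tree collects all factors other than $k_{e_0}$ into constants; absorbing these into coefficients $b^s_{ij}$ gives the one\hyp dimensional likelihood the familiar product-over-sites form $L(t) = \prod_{s} \left( \sum_{ij} b^s_{ij}\, k_{e_0}(i,j,t) \right)$. Assumption~\ref{as3} then guarantees $k_{e_0}(i,j,t) = p^{ij}_{e_0}(x)$ with $x = \exp(-\gamma_{e_0} t)$, so that $\mathcal{L}(x) := L\!\left(-\tfrac{1}{\gamma_{e_0}}\log x\right)$ is a genuine polynomial in $x$. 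This is the one and only structural property that the remainder of the argument consumes.

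With $\mathcal{L}$ in hand the remaining steps are identical to the reversible case. Setting $c_1 = 1 + \sup_{z \in B}|\mathcal{L}(z)|$ for the compact set $B$ defined in \eqref{star}, the triangle inequality forces $\mathcal{L}(x)+c_1$ to have no root in $B$; the Fundamental Theorem of Algebra factors $\mathcal{L}(x)+c_1 = \prod_s g_s(x)$ into linear and rootless-quadratic pieces, each positive on $[0,1]$ and free of roots in $B$; Lemma~\ref{lem1} writes every $g_s$ as a positive linear combination of the Kimura transition polynomials; grouping by the model's symmetries produces nonnegative coefficients $\eta^s_l$; and Theorem~\ref{density} constructs a Kimura tree $(\tau,\psi)$ realizing exactly these coefficients up to a multiplicative constant $c_2$. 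Undoing the reparametrization $x = \exp(-\gamma_{e_0} t)$ then returns the asserted identity $L(t) = c_2\, L_{\operatorname{K2P}}(\tau,\psi, c_3 t) - c_1$ for positive constants $c_1, c_2, c_3$, with $c_3$ proportional to $\gamma_{e_0}$.

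The observation that makes this extension legitimate — and the one I would check most carefully — is that reversibility of the \emph{source} model is never invoked once the polynomial $\mathcal{L}$ has been obtained. The set $B$, Lemma~\ref{lem1}, and Theorem~\ref{density} are all statements purely about the Kimura 2-parameter target, which is itself reversible and lives in the standard phylogenetic setting, so none of them cares whether the original kernels were reversible. Consequently the only genuinely new point is the opening step: that fixing all but one branch length in a non-reversible graphical model still yields a function of $t$ that is polynomial in $x = \exp(-\gamma_{e_0} t)$. This follows immediately from the edge-factorization of the marginal together with Assumption~\ref{as3}, so I expect no real obstacle to arise; the theorem is, as the surrounding text claims, a direct transcription of the reversible case.
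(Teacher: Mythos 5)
Your proposal is correct and matches the paper's approach exactly: the paper itself proves this theorem simply by remarking that the argument of Theorem~\ref{uni} carries over verbatim once Assumption~\ref{as3} supplies the polynomial representation $\mathcal{L}(x)=L(-\tfrac{1}{\gamma_{e_0}}\log x)$, and your observation that reversibility of the source model is never used after that point is precisely the justification the paper relies on. Nothing further is needed.
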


\section{Conclusions and discussion}

In this work, we investigate the problem of characterizing the shape of one\hyp dimensional phylogenetic likelihood functions.
Our results classify all evolutionary models into two categories:
\begin{enumerate}
\item[1. ] For binary, Jukes-Cantor and Felsenstein 1981 models: the one\hyp dimensional likelihood function has at most one stationary point.
\item[2. ] For Kimura 2-parameter model and more advanced evolutionary models: the shape of the one\hyp dimensional likelihood function can be much more complex.
In fact, the space of all rescaled and translated one\hyp dimensional phylogenetic likelihood functions under such a model is dense in the set of all non-negative continuous functions on $[0, \infty)$ with finite limits.
\end{enumerate}

Despite the complexity of the one\hyp dimensional likelihood functions under advanced evolutionary models, we prove that all one\hyp dimensional phylogenetic likelihood function are essentially Kimura 2-parameter likelihood functions.
This result establishes a strong foundation for the use of the Kimura 2-parameter as the building block of all evolutionary models.

Our results are based on two novel techniques.
First, we introduce and use \emph{characteristic polynomial representations} of one\hyp dimensional phylogenetic likelihood functions and the Fundamental Theorem of Algebra to decompose any evolutionary models into smaller modules, each of which resembles the Kimura 2-parameter model.
Second, we introduce the new concept of \emph{logarithmic relative frequency patterns} and analyze algebraic structures on the space of such patterns.
These structures open a new way to explore the space of all possible likelihood functions.
Moreover, by analyzing these structures, we are able to tackle the inverse problem of constructing a phylogenetic tree that has a given frequency pattern at the root.
This enables us to construct phylogenetic trees that approximate any given likelihood function with arbitrary precision.

There are several avenues for improvement.
Firstly, while we know that the shape of one\hyp dimensional likelihood function can be very complex, it is not clear how frequently multimodality might be encountered in practice and to which degree it affects the accuracy of phylogenetic algorithms.
Since the space of high degree polynomials are dominated by multimodal functions, one might expect that as the number of sites and the size of the tree increase, multimodality becomes more likely.
However, since the space of phylogenies is known to possess considerable hidden structure which sometimes lead to counter-intuitive properties, careful analysis of the space of all rescaled and translated one\hyp dimensional phylogenetic likelihood functions under the Kimura 2-parameter model are required to evaluate this hypothesis.
Secondly, although the focus of this work is on one\hyp dimensional phylogenetic likelihood functions, it is possible to utilize the framework we propose to study full phylogenetic likelihood functions.
This will be a subject for future work.

\section{Acknowledgements}
We are grateful to Connor McCoy and Brian Claywell for their work on surrogate functions for likelihood computation, which motivated this research.
This work is supported by DMS-1223057 and CISE-1564137 from the National Science Foundation and U54GM111274 from the National Institutes of Health.

\bibliographystyle{ieeetr}
\bibliography{biblio}

\section{Appendix}

\begin{proof}[Proof of Remark~$\ref{rem1}$]
If we denote the entries of the diagonalizing matrix $M$ and $N$ of $Q$ by $m_{ij}$ and $n_{ij}$, respectively, then \[
Q_{ij}= \sum_{k}{m_{ik} e^{-r_k} n_{kj}}.
\]
where $-r_k$ are the eigenvalues of $Q$. (The eigenvalues are known to be non-positive, so $r_k$ are non-negative.)

Since the set of rational numbers $\mathbb{Q}$ is dense in $\mathbb{R}^+$, we can find $r(k,l) \in \mathbb{Q}^+$ such that for all $k$, $r(k,l) \to r_k$ as $l$ approaches infinity. If we define
\[
Q^l_{ij}= \sum_{k}{m_{ik} e^{-r(k,l)} n_{kj}},
\]
then $Q^l \to Q$ element-wise as $l$ approaches infinity. Since $r(k,l)$ are all rational the matrices are of fixed finite dimension, we can also find $\gamma_l >0$ and $d(k,l) \in \mathbb{N}$ such that $r(k,l)=d(k,l) \gamma_l$.
\end{proof}

\begin{proof}[Proof of Theorem~$\ref{group}$]

We define the equivalence relation $\sim$ on $\mathbb{R}^{4 \times S}$ as follows: $u \sim v$ if and only if there exists a vector of real constants $c=(c_1, \ldots, c_S)$ such that for all $i =1,2,3,4$ and $s=1, \ldots, S$, we have
\[
u_{i,s} = v_{i,s} + c_s.
\]

If we define
\[
[h(\tau, \psi)]_{i,s} = \log  \sum_{a \in \mathcal{Z}_{i,s}}{ \prod_{(u,v) \in E(\tau)}{P^{uv}_{a_ua_v}(t_{uv})}}
\]
for $i\in \Omega$, $s=1,\ldots,S$  and  $\mathcal{Z}_{i,s}$ being the set of all extensions $a$ of $\psi_s$ to all the nodes of $\tau$ such that $a(\rho)=i$.

Recall that the Kimura 2-parameter model has a uniform stationary distribution, for all $\tau, \psi, c$, we have $\phi(\tau, \psi,c) \sim h(\tau, \psi)$.

\begin{enumerate}

\item[1.] (Addition): Consider any two elements $x_1, x_2 \in G$. By the definition of $G$ and since the stationary frequency of the evolutionary model is the same for every state, there exist trees $\tau_1, \tau_2$ with $n_1, n_2$ taxa and labelings $\psi_1, \psi_2$ such that
 \[
 x_i \sim h(\tau_i, \psi_i) , \h i=1,2.
 \]
If we construct a new tree $\tau$ from $\tau_1$ and $\tau_2$ by gluing the roots $\rho_1, \rho_2$ and label the taxa of $\tau$ corresponding to $\psi_1, \psi_2$, then we have
\[
[h(\tau, \psi)]_{i,s} = \log  \sum_{a \in \mathcal{Z}_{i,s}}{ \prod_{(u,v) \in E(\tau_1)}{P^{uv}_{a^1_ua^2_v}(t_{uv})}\prod_{(u,v)  \in E(\tau_2)}{P^{uv}_{a^2_ua^2_v}(t_{uv})}}
\]
where each term $a$ in the sum corresponds uniquely to a pair of extensions $(a^1,a^2)$ of $\psi_1^s, \psi_2^s$ to the internal nodes of $\tau_1, \tau_2$, respectively, such that $a^1(\rho)=a^2(\rho)=i$.

Therefore,
\begin{eqnarray*}
[h(\tau, \psi)]_{i,s}  &=& \log  \sum_{a^1}{ \prod_{(u,v)  \in E(\tau_1)}{P^{uv}_{a^1_ua^1_v}(t_{uv})}} +  \log  \sum_{a^2}{ \prod_{(u,v) \in E(\tau_2)}{P^{uv}_{a^2_ua^2_v}(t_{uv})}}\\
&=& [h(\tau_1, \psi_1)]_{i,s}  + [h(\tau_2, \psi_2)]_{i,s}
\end{eqnarray*}
for all $i \in \Omega$ and $s =1, \ldots, S$.

Therefore
\[
h(\tau, \psi) \sim h(\tau_1, \psi_1) + h(\tau_2, \psi_2)
\]
and
\[
x_1 + x_2 \sim h(\tau, \psi)  \in G
\]
which implies that $G$ is closed under addition.

\item[2.]

(Inverse): Consider any element $x \in G$ and its corresponding representative tree $\tau$ and labeling $\psi$. For any permutation $\sigma$ of the states, we define the labeling $\psi_\sigma$ as
\[
\psi_{\sigma}(\omega)=\sigma(\psi(\omega))
\]
for every taxon $\omega$ of $T$. For example, if $\sigma$ is the permutation $(A ~ G ~ T~ C)$ in cycle notation, then $\psi_{\sigma}$ is obtained from $\psi$ by replacing $A$ by $G$, $G$ by $T$, $T$ by $C$ and $C$ by $A$.

Now let $\sigma_0$ be a permutation of order $r$ on the state space $\Omega$, create $r$ identical copies $\tau_1, \tau_2, \ldots, \tau_r$ of the tree $\tau$ with labelings $\psi_{\sigma_0}$, $\psi_{\sigma_0^2}$, \ldots, $\psi_{\sigma_0^r}$ and glue the root of all the trees together with taxon labeling $\gamma$ corresponding to the labelings of $\tau_1, \tau_2, \ldots, \tau_r$.
Then because of symmetry, the frequency pattern $f$ at the root of the newly created tree $\mu$ will be the same for every state, i.e., $f \sim 0$. We deduce that $0 \in G$ and for every $x \in G$, there exists $y \in G$ such that $x+y=0$.

This property and the fact that $G$ is closed under addition prove that $(G,+)$ is a subgroup of $(\mathbb{R}^{r\times S}, +)$.

\item[3.]
(Connectedness): Consider any two elements $x_1, x_2 \in G$ and their corresponding trees $\tau_1, \tau_2$, labelings $\psi_1, \psi_2$ and vectors of real constants $c_1,c_2$. For any $\alpha \in (0,1)$, we create a new tree $\tau(\alpha)$ by adding a new root $\rho$, joining $\rho_1, \rho_2$ by new edges of length $t_1=\tan(\frac{\pi}{2}\alpha), t_2=1/t_1$, respectively.
We make $\rho$ the root of $\tau$ and label the taxa of $\tau$ according to  $\psi_1, \psi_2$.

Now we note that when $\alpha \to 0$, we have
\[
h(\tau(\alpha), \psi) \to h(\tau_1, \psi_1) + \log{\frac{1}{r}}
\]
since the contribution of $\tau_2$ becomes stationary (the stationary frequency is $1/r$ because of the model's symmetry). Similarly, when $\alpha \to 1$, we have
\[
h(\tau(\alpha), \psi) \to h(\tau_2, \psi_2) + \log{\frac{1}{r}}.
\]
Therefore, the function $g(\alpha)=\phi(\tau(\alpha), \psi)$ can be extended continuously to the closed interval $[0,1]$. By changing $c$ continuously from $c_1$ to $\log{(1/r)}$, varying $\alpha$ continuously from $0$ to $1$, then changing $c$ continuously from $\log{(1/r)}$ to $c_2$, we can make a path in $G$ that connects $x_1$ and $ x_2$.

\item[4.] Since any path-connected subgroup of $\mathbb{R}^n$ is a linear subspace \cite{hayashida1949arc}, so is $G$.

\end{enumerate}
\end{proof}

\begin{proof}[Proof of Theorem~$\ref{Kimura}$]

Denote by $\mathcal{H}$ the set of all rooted trees with one edge (which have varying branch lengths) and
\[
H = \{ \phi(\tau, \psi, c): \tau \in \mathcal{H}, \psi=(\psi_1, \psi_2, \ldots, \psi_S) \in \mathbb{R}^{S}, c \in \mathbb{R}^S\}.
\]
Note that in the context of this paper, trees with different branch lengths (or in other words, different values of $x$) are considered as different trees.
Thus the set $H$ defined here is non-trivial.

We have
\begin{equation}
[h(\tau,\psi)]_{j,s} = \log  P_{\psi_s j}(x)
\label{equind}
\end{equation}
where $x=\exp(-0.5t)$, $j =A, G, T, C$, and $t$ is the length of the unique edge of $\tau$.

Let $x_1=1/4$, $x_2=1/2$, $x_3=3/4$. We will prove, by induction on $S$, that $H$ contains $4\times S$ independent frequency patterns.

For $S=1$, by considering the 4 different patterns $(A), (G), (T), (C)$ at the only leaf and the 3 values of $x$ (corresponding to different branch lengths) described above, we can create a set of $4\times 3=12$ different pairs $(\tau, \psi)$.
A quick check by computer shows that the corresponding frequency patterns generated by those pairs span the whole vector space $\mathbb{R}^{4\times 1}$.
We can achieve similar result for $S=2$ with the patterns $(A,G), (G, T), (T,C), (C,A)$.

Now assume that for $S=n$, $H$ contains $4\times n$ independent frequency patterns of the form \eqref{equind}.

For $l=A, G, T, C$ and $x \in [0,1]$, we define the building blocks
\[ R_l(x):= \left[\log P_{lA }(x) ~~
\log P_{l G}(x) ~~
\log P_{l T}(x) ~~
\log P_{l C}(x) \right]
\]
\[ W_l(x):=\left( \begin{array}{c}
 R_l(x_1) \\
 R_l(x_2)  \\
 R_l(x_3)
\end{array} \right),\]

The induction hypothesis implies that there exist $4\times n$ independent frequency patterns of the form \eqref{equind}.
This means that for some labelings $\psi_1, \psi_2, \ldots, \psi_{4n}$, the block matrix
\[ J=\left( \begin{array}{cccc}
B_1 \\
B_2 \\
\cdots \\
B_{4n}
\end{array} \right)\]
has maximal rank $4n$, where
\[ B_s:=\left( \begin{array}{cccc}
R_{\psi_s^1}(x_1) \cdots  R_{\psi_s^n }(x_1) \\
R_{\psi_s^1 }(x_2) \cdots  R_{\psi_s^n }(x_2)  \\
R_{\psi_s^1 }(x_3) \cdots  R_{\psi_s^n }(x_3)
\end{array} \right).\]

 For $s=1, \ldots, 4n$, we consider all the labelings obtained by appending $\psi_s$ with one of the four nucleotides $A, G, T, C$. By doing so, we create a set of $48n$ different frequency patterns.
 We want to prove that the block matrix

\[ C =\left( \begin{array}{cccc}
B_1 \qquad W_{A}(x)\\
B_2 \hfill W_{A}(x) \\
\cdots \\
B_{4n} \hfill W_{A}(x)
\vspace{4pt}\\
B_1 \hfill W_{G}(x)\\
B_2 \hfill W_{G}(x) \\
\cdots \\
B_{4n} \hfill W_{G}(x)
\vspace{4pt}\\
B_1 \hfill W_{T}(x)\\
B_2 \hfill W_{T}(x) \\
\cdots \\
B_{4n} \hfill W_{T}(x)
\vspace{4pt}\\
B_1 \hfill W_{C}(x)\\
B_2 \hfill W_{C}(x) \\
\cdots \\
B_{4n} \hfill W_{C}(x)
\end{array} \right)\]
has maximal rank $4n +4$.

Note that this matrix is row-equivalent to
\[ \left( \begin{array}{cccc}
J \h U \\
0 \h V
\end{array} \right)\]
where each row of $V$ is of the form $R_{i}(x_k) - R_{A}(x_k)$ for $i= G, T, C$.
(This is done by subtracting the blocks $(B_s~R_{i}(x))$ by the block ($B_s ~ R_{A}(x)$) then rearranging the row to obtain the sub-matrix $J$ at the top-left corner.)

On the other hand, from the case $S=1$, we have
\[ \text{rank}\left( \begin{array}{cccc}
W_{A}(x)\\
W_{G}(x)\\
W_{T}(x)\\
W_{C}(x)\\
\end{array} \right)  = 4,\]
which implies that $\text{rank}(V)=4$. Hence, $\text{rank}(C)=\text{rank}(J)+\text{rank}(V)=4n+4$.

We deduce that for every $S$, the set $G$ of all possible logarithmic conditional frequency patterns with $S$ sites under the Kimura 2-parameter model is a linear subspace of $\mathbb{R}^{4\times S}$ (Theorem~\ref{group}) that contains $4S$ linearly independent vectors. This implies that $G=\mathbb{R}^{4\times S}$.

\end{proof}

\begin{proof}[Proof of Theorem~$\ref{main2}$ (Step 1)]

(Any pattern of the form $v=[x~ 0~ 0 ~0 ]$ can be produced by a tree $\tau$ with four edges.)

Denote
\[
x_1(t) = P_{AA}(t) \h x_2(t) = P_{AG}(t)
\]
\[
x_3(t) = P_{AT}(t) \h x_4(t) = P_{AC}(t)
\]
we note that in the Kimura 2-parameter model, $x_3(t)=x_4(t)$.

Now consider two trees $\tau_1$ and $\tau_2$, each with one edge, whose branch lengths are $t$ and $s$, respectively.
We label the only nodes of $\tau_1$ and $\tau_2$ by the patterns $\psi_1=(A)$ and $\psi_2=(G)$, and obtain the frequency patterns $f_1(t)$ and $f_2(s)$ respectively.
By gluing the roots of $\tau_1$ (1 edge) and the ``inverse'' of the tree $\tau_2$ (3 edges)), we obtain a tree $T(t,s)$ with 4 edges whose frequency pattern is equivalent to
\[
f_1(t) - f_2(s) \sim \left [\log \frac{x_1(t) x_4(s)}{x_4(t)x_2(s)}, \log \frac{x_2(t) x_4(s)}{ x_4(t)x_1(s)}, 0, 0 \right].
\]

On the other hand, we note that for the Kimura 2-parameter model \eqref{qkimura}, $$\frac{x_2(t)}{x_4(t)} = 1+ 2 \exp(-0.5t)$$ only admits values in the interval $[1, 3]$, while $x_1(s)/x_4(s)$ is a continuous decreasing function in $s$ that admits all values in the interval $[1, \infty)$. Hence, for every $t>0$, there exists a unique $k(t)>0$ such that
\[
 \frac{x_2(t) x_4(k(t))}{ x_4(t)x_1(k(t))} = 1.
\]
Moreover, $k(t)$ is a continuous function in $t$ and
\[
\lim_{t\to \infty}{k(t)} = \infty \h  \lim_{t\to 0}{k(t)} =k_0
\]
where $k_0$ satisfies $x_1(k_0)/x_4(k_0) = 3$.

Now if we denote
\[
g(t) = \frac{x_1(t) x_4(k(t))}{x_4(t)x_2(k(t))}
\]
then $g(t)$ is a continuous function that satisfies
\[
\lim_{t\to \infty}{g(t)} = 1 \h  \lim_{t\to 0}{g(t)} =\infty.
\]

We deduce that for a range of $t$,
\[
f_1(t) - f_2(k(t)) \sim   [\log g(t), 0, 0, 0 ]
\]
which admits every patterns of the form $[x, 0 , 0 , 0]$ with $x>0$.
Similarly
\[
f_2(k(t)) - f_1(t) \sim   [-\log g(t), 0, 0, 0 ]
\]
admits every patterns of the form $[x, 0 , 0 , 0]$ with $x<0$.
This completes the proof.

\end{proof}

\begin{proof}[Proof of Theorem~$\ref{density}$]

From Theorem~$\ref{group}$, there exists a rooted tree $\tau$, a labeling $\psi$ and a vector of real constants $c = (c_1, \ldots, c_S)$ such that
\[
c_s + \log  \sum_{a \in \mathcal{Z}_{i,s} }{ \pi(i) \prod_{(u,v) \in E(\tau)}{P^{uv}_{a_ua_v}(t_{uv})}}= \log (\eta^s_i).
\]
(Recall that $\mathcal{Z}_{i,s}$ is the set of all extensions $a$ of $\psi_s$ to all the nodes of $\tau$ such that $a(\rho)=i$.)
For any $t>0$, we create a new tree $\tau(t)$ by adding an edge $e$ of length $t$ to the root $\rho$ and labeling the additional taxon by the constant vector $(A, A, \ldots, A)$. The log-likelihood function on $e$ of $\tau(t)$ given this taxon labeling is
\begin{eqnarray*}
\ell(t)&=&\sum_{s=1}^S{\log{ \left(\sum_i \sum_{a \in  \mathcal{Z}_{i,s}}{ \pi(i) \prod_{(u,v) \in E(\tau)}{P^{uv}_{a_ua_v}(t_{uv})}} P_{iA}(t)\right)}} \\
&=& - \sum_{s=1}^S{c_s} + \sum_{s=1}^S{\log{\left( \sum_{i}{\eta_i^s P_{i}(t)}\right)}}.
\end{eqnarray*}
Theorem~$\ref{main2}$ implies that the tree $\tau$ can be constructed with at most $64S$ edges. Hence, $\tau(t)$ has at most $(64S +1)$ edges.
\end{proof}

\begin{proof}[Proof of Lemma~\ref{lem1}]
We first consider the case of linear functions.
Assume that $f(x)= a x + b$ such that $f$ is positive in $[0,1]$. We deduce that $b+a =f(1)>0$ .
Hence $f$ can be written as
\begin{eqnarray*}
f(x) &=& ax +b \\
&=& 2(b-a) \left(\frac{1}{4}-\frac{1}{4} x \right)  +(b+a)\left(\frac{1}{4}+\frac{1}{4}x -  \frac{1}{2}x^2 \right) + (b+a)\left(\frac{1}{4}+\frac{1}{4}x +  \frac{1}{2}x^2 \right)\\
&=& 2(b-a) P_3(x) + (b+a) P_2(x) + (b+a) P_1(x)
\end{eqnarray*}
using the transition polynomials $P_i(x)$ from equation $\eqref{equx}$.

Since \{$P_1, P_2, P_3$\} are linearly independent, we deduce that $f$ can be expressed as positive linear combination of $P_1, P_2, P_3$  if and only if $f(-1) =b-a > 0$.

If $f(x)$ is a monic polynomial of degree 2 with no real roots, then $f$ can be written as
\begin{eqnarray*}
f(x)&=& x^2 - 2a x + a^2+b^2\\
&=& [(a-1)^2 + b^2 -1] P_1(x) \\
&& +  [(a-1)^2 + b^2 -2] P_2(x)\\
&&  +2 [(a+1)^2 + b^2 -1] P_3(x).
\end{eqnarray*}
The coefficients are positive if and only if $a\pm bi$ do not belong to $B$.

\end{proof}

\end{document}